\newcommand\scalemath[2]{\scalebox{#1}{\mbox{\ensuremath{\displaystyle #2}}}}
\newtheorem{lemma}{Lemma}
\newtheorem{proposition}{Proposition}
\newcommand{\be}{\begin{equation}}
\newcommand{\ee}{\end{equation}}
\newcommand{\bea}{\begin{eqnarray}}
\newcommand{\eea}{\end{eqnarray}}
\newcommand{\bdp}{\begin{displaymath}}
\newcommand{\edp}{\end{displaymath}}
\begin{document}
%
\title{Low Complexity Algorithms for Mission Completion Time Minimization in UAV-Based ISAC Systems}
%
%
%

\author{Mateen Ashraf, Anna Gaydamaka, Bo Tan, Dmitri Moltchanov, Yevgeni Koucheryavy
\thanks{The authors are with Tampere University.}
}

%
%

\markboth{Journal of \LaTeX\ Class Files, October~2022}%
{Shell \MakeLowercase{\textit{et al.}}: Bare Demo of IEEEtran.cls for IEEE Journals}
%



\maketitle

\vspace{-2 cm}
\begin{abstract}
The inherent support of sixth-generation (6G) systems enabling integrated sensing and communications (ISAC) paradigm greatly enhances the application area of intelligent transportation systems (ITS). One of the mission-critical applications enabled by these systems is disaster management, where ISAC functionality may not only provide localization but also provide users with supplementary information such as escape routes, time to rescue, etc. In this paper, by considering a large area with several locations of interest, we formulate and solve the optimization problem of delivering task parameters of the ISAC system by optimizing the UAV speed and the order of visits to the locations of interest such that the mission time is minimized. The formulated problem is a mixed integer non-linear program which is quite challenging to solve. To reduce the complexity of the solution algorithms, we propose two circular trajectory designs. The first algorithm finds the optimal UAV velocity and radius of the circular trajectories. The second algorithm finds the optimal connecting points for joining the individual circular trajectories. Our numerical results reveal that, with practical simulation parameters, the first algorithm provides a time saving of at least $20\%$, while the second algorithm cuts down the total completion time by at least $7$ times.
\end{abstract}

\begin{IEEEkeywords}
Integrated sensing and communication (ISAC), trajectory design, unmanned aerial vehicle (UAV).
\end{IEEEkeywords}

%
\IEEEpeerreviewmaketitle
\vspace{-4mm}
\section{Introduction}

\IEEEPARstart{I}{ntelligent} transportation systems (ITS) are evolving rapidly over the last decade \cite{sumalee2018smarter}. The use of connected aerial vehicles opens a door to extend the current transportation system in vertical which is an extra  dimension for new applications. The autonomous individual or fleets of unmanned aerial vehicles (UAVs) have been firmly believed to be promising for large-scale landscape inspections, detection and localization of objects, providing enhancement or provision of connectivity services and swift logistics for the distribution on occasions when access with conventional vehicles is difficult \cite{jung2020conceptual}. Large UAVs like electric vertical take-off and landing (eVTOL) vehicles are even capable of personnel and heavy-load goods logistic. One of applications of verticalized future ITS is disaster management. These applications are classified as mission-critical ones \cite{ahmad2019towards} and are useful in earthquakes, flooding, etc. In such applications the objective is not only to navigate rescue teams but also to provide victims with some additional routing information for escape. To this aim, the system should perform both communications and sensing functions simultaneously.


The future sixth-generation (6G) cellular systems promise to deliver integrated sensing and communications (ISAC) \cite{liu2022integrated,cui2021integrating}. The sensing in the mobile network is a collection of functions, such as detection, and estimation of distance motion status of the cooperative or non-cooperative objects. We categorize the ISAC into cooperative and non-cooperative according to whether sensing functions are performed with the help of direct transmission from active radio signal emitters or with the help of scattered signals from passive objects. The cooperative ISAC is often considered the complementary means for the areas where the conventional positioning methods are denied or needed to be enhanced, for example, the GPS in urban canyons \cite{GPS_urban_canyon} and localizing of the 5G-equipped aircraft \cite{BoSun_NewSense}. The non-cooperative ISAC is analogous to performing the active and passive radar while simultaneously carrying communication on unified radio waveforms.  
The integration of ISAC functionalities in aerial vehicles, such as UAVs, can greatly enhance the efficiency of operations in the future verticalized ITS, for example, object detection and trajectory prediction for collision avoidance \cite{Iuliia_TVT}, unmanned aircraft system traffic management (UTM) \cite{ISAC_UTM}, provision of the connectivity and positioning information for the ground transportation participants from the space \cite{ISAC_UAVNet_ITS}.

In this paper, we optimize the use of UAVs for emergency and rescue operations. As opposed to the large set of other studies, we assume that UAVs perform two cooperative ISAC functions: (i) localization of victims and (ii) providing them with additional information such as routes to escape, time to rescue, etc. These functions are performed simultaneously with the help of a UAV via ISAC scanning the disaster areas. By considering a large area with several locations of interest, we formulate and solve the optimization problem where the objective is to minimize the ISAC mission completion time by optimizing the UAV trajectory. 

The main contributions of our work are given as follows:
\begin{itemize}
    \item Mathematical formulation of the positioning and communication problem into a single optimization problem where the objective is to minimize the total mission completion time.
    \item Low complexity algorithms are proposed for solving the formulated problem under different topographical conditions. Specifically, two low complexity algorithms are proposed where the first algorithm solves the optimization problem by providing a single circular trajectory for UAV while the second algorithm solves the optimization problem by providing multiple circular trajectories for individualized smaller scanned regions and subsequently provides the trajectory for interlinking those individual circular trajectories.
    \item It is observed that the proposed algorithms significantly reduce the mission completion time. Furthermore, the saving in mission completion time is either independent of the data threshold or linearly increases with the data threshold depending on the maximum velocity of the UAV. Moreover, in our proposed algorithms, the savings in mission completion time increase with the increase in carrier frequency.
\end{itemize}

The rest of the paper is organized as follows. We review the related work in Section II. The system model and problem formulation are presented in Section III. The proposed algorithms are developed in Section IV. Numerical results are provided in discussed in Section V. Finally, conclusions are drawn in Section VI.

%
%
%
%
 \vspace{-4mm}
\section{Related Work}
The UAV based wireless communication systems have attracted significant attention from the research community. Especially, the performance improvements brought by UAVs in relaying, offloading, and data collection systems have been documented in \cite{zeng1}, \cite{zeng2} and \cite{zeng3}, respectively, for finite number of ground serving points/users. In the recent years, there has been a rapid rise in the implementation of UAVs in search and rescue (SAR) operations \cite{goodrich2009towards}. With the potential for quick monitoring of large areas, UAVs help detect lost people with reduced  cost, time, and risks \cite{WANKMULLER2021102567}. 
The study in \cite{9321707} introduced SARDO, a drone-based solution to localize missing people through mobile phones by using pseudo-trilateration and machine learning. Rescue operations can also occur underwater. For instance, in \cite{MSAR}, the authors proposed the use of UAVs for facilitating localization of multiple targets in the sea zone for SAR operations. The paper \cite{flood} further suggested a deep learning-aided model for detecting people affected by floods.

ISAC systems create an environment where sensing and communication functions perform mutual assistance \cite{cui2021integrating, liu2022integrated}. ISAC solutions achieve higher positioning accuracy, improve wireless communications quality of service and open a brand new range of services \cite{rong20216g}. ISAC is expected to come in handy in many fields such as high-accuracy localization and tracking, simultaneous imaging, mapping, localization, augmented human sense, gesture and activity recognition \cite{9376324}. The work \cite{meng2022uav} mentioned two UAV assisted ISAC deployments: sensing-assisted UAV communication and communication-assisted UAV sensing. The former considers a UAV-to-ground vehicle communication scenario, where vehicle properties (location, velocity) can be extracted from the reflected ISAC signals for beam tracking and alignment. The latter tries to overcome the limited computational ability of UAVs by offloading some computationally-intensive sensing tasks to the central UAV. In \cite{jing2022isac}, authors developed an ISAC framework for the UAV trajectory design with single UAV, UE and a single target. Specifically, the formulated task aims to optimize both the downlink communication rate and localization accuracy.  

Despite the applications and obvious advantages, UAV implementation has a core bottleneck - trajectory planning. Techniques for path planning are usually computationally demanding and infeasible in field experiments.  
In \cite{9641740}, the linear programming formulation was used to model a load-balanced automatic path planning in a heterogeneous UAV swarm. Then, an adaptive clustering-based algorithm was implemented for scanning all the regions in minimum time. 

Without focusing on a specific use case, the study in \cite{9334996} suggested the Improved Bat Algorithm (IBA), which combines the Artificial Bee Colony Algorithm and Bat Algorithm. The IBA takes into account the obstacles while planning the path and outperforms the conventional algorithms in terms of convergence and time complexity. However, the UAV flight path-planning problem in \cite{9334996} was considered only for a static environment.

Besides heuristics methods, machine learning is also adopted for finding the optimal path. Authors in \cite{9121767} investigated the distributed round trip path planning and collision avoidance problems by formulating them as combinatorial/convex optimization and reinforcement learning (RL) problems, respectively. 

The work \cite{9190025} used the deep RL to resolve a mixed-integer non-linear optimization problem which is formulated for UAV path planning and caching in a content delivery system with constraints on UAV trajectory, radio resources, and caching replacement. The solution showed robustness in convergence and energy consumption. Though the solution was claimed to be appliable for drone swarm, only the single UAV result was shown.  

The study in \cite{9316741} also advocated deep RL with biologically inspired algorithms for fast convergence in multi-UAV path planning problems. However, the performance of the solution in the large-scale UAV group needs to be further tested. 

None of the past approaches for UAV trajectory planning provides guaranteed solutions for mission-critical applications such as SAR operations. On top of this, this task has not been addressed in the context of ISAC systems, where the UAVs are not only searching but also supplementing the detected units with additional routing information via the communications channel. These requirements further complicate the problem.
  
\vspace{-2mm}
\section{System model and problem formulation}
This section is divided into four parts. Section III-A provides a detailed discussion of our assumptions about the localization approach used in this paper. Section III-B discusses the assumptions related to the scanning area of interest and the UAV trajectory. Section III-C addresses the signal propagation model. Optimization problem formulation for minimizing the mission completion time is discussed in Section III-D. Moreover, the main system parameters used in this paper are provided in Table I.

\vspace{-0.5 cm}
\subsection{Time of Arrival Based Localization Approach}
We consider the time of arrival (ToA) based localization scheme. Specifically, the UAV transmits beacon sequences which are used to obtain the time differences between the transmitted times and the received times of those beacon sequences. The information about the transmit times, along with the absolute positions from which the beacon signals were transmitted, is encoded into a broadcast message which is transmitted by the UAV. Based on the successful decoding of the broadcast message and the propagation times of the beacon sequences, any receiver node within the scanning area can obtain the absolute distances from the UAV. With the sufficient number of distance calculations based on the beacon sequences, the receiver is able to find its absolute location in three-dimensional (3D) space.

\begin{table}
\begin{center}
\caption{Main system parameters.}
\vspace{-3mm}
\begin{tabular}{ |l| l| }
\hline
System parameter & Notation \\
\hline
\hline
 transmission carrier frequency & $f_c$ \\
 \hline
 transmission bandwidth & $\Delta f$  \\
 \hline
 transmission power & $P$ \\
 \hline
 noise power at the receiver & $\sigma^2$ \\
 \hline
 normalized signal to noise ratio at receiver & $\gamma$ \\
 \hline
 UAV's directional antenna gain in direction $(\phi, \varphi)$ & $G(\phi,\varphi)$ \\
 \hline
 radius of the smallest circle covering the whole area & $R$ \\
 \hline
 center of the scanning area & $\mathbf{c}$ \\
 \hline
 radius of the UAV trajectory & $r_U$ \\
 \hline
 height of the UAV from ground & $H$ \\
 \hline
 position of the UAV at $t$-th time instant & $\mathbf{u}(t)$ \\
 \hline
 total completion time & $T$ \\
 \hline
 number of connected intervals for position $\mathbf{x}$ & $N(\mathbf{x})$ \\
 \hline
 starting time of $n$-th connection interval for position $\mathbf{x}$ & $t_s^n(\mathbf{x})$ \\
 \hline
 ending time of $n$-th connection interval for position $\mathbf{x}$ & $t_e^n(\mathbf{x})$ \\
 \hline
 data transferred to position $\mathbf{x}$ over time $T$ & $\hat{R}(\mathbf{x})$ \\
 \hline
\end{tabular}
\end{center}
\vspace{-4mm}
\end{table}

In the considered ToA-based localization, it is necessary for the UAV to remain mobile. This is explained by the following example for one dimensional (1D) case. Consider that we have a transmitter with a known location and a receiver node with an unknown location. Further, assume that the receiver node wants to know its location. The transmitter broadcasts a signal containing the beacon sequence along with its transmit time and the transmitter's location from which the beacon sequence was transmitted. With the help of the correct reception of the broadcast signal, the receiver knows its absolute distance from the transmitter. However, this information is insufficient to deduce whether it is on the right or left side of the transmitter. To precisely localize the receiver in 1D space in a zero noise scenario, the transmitter must send at least two beacon sequences to remove the ambiguity about the receiver node's left/right direction possibilities with respect to the transmitter. However, if both the beacon sequences are transmitted from the same location, then the receiver will not be able to resolve the left/right direction ambiguity. On the other hand, if the transmitter changes its location, the receiver can resolve the left/right ambiguity. Although this toy example considers 1D localization under zero noise, the observation is also valid for 2D and 3D localization problems under noisy conditions. Hence, in the proposed ToA-based localization, it is necessary for the transmitter to remain mobile.

Thus, the successful localization at any particular location in the coverage area depends on successfully detecting the mobile UAV's broadcast message. Essentially, the localization task is therefore converted into a multicast/broadcast task. However, we emphasize that while the localization task is converted into a multicast/broadcast task, the approach to accomplish it is not the same as used for conventional multicast/broadcast scenarios. This is explained with the help of the following example. Assume that the task is to broadcast a message to all users inside a circular coverage area. While in a conventional broadcast/multicast scenario placing the transmitter at the center for the whole transmission time in a conventional broadcast/multicast scenario may accomplish the task, it is not applicable to the scenario considered in this paper due to the requirement of constant changes in the location of the transmitter.

\vspace{-4mm}
\subsection{UAV Trajectory and Scanning Area}
Owing to the above-mentioned reasons, we assume that the UAV is constantly in motion. Moreover, we consider a circular scanning area where the positioning and communication services must be provided. The radius of the coverage area is denoted by $R$ while its center is denoted by $\mathbf{c}$. For scenarios with a non-circular scanning area, the scanning area can be approximated by the smallest circle that encompasses the whole scanning area. The scanning area consists of open outdoor and indoor regions due to the building blocks. During the whole mission time, the UAV flies at an altitude of $H$ meters. The exact value of $H$ is chosen to avoid collisions with the buildings. Specifically, if $H_b$ is the height of the tallest building in the coverage area, we chose $H > H_b$\footnote{The height of the UAV can be chosen to be $H = H_b+H_m$ where $H_m$ is a suitable margin.}. On the other hand, the $x$, $y$ coordinates of the UAV at any given time are chosen according to the trajectory. An illustration of the considered scenario is shown in Fig. \ref{fig:scenarion_main}.
\begin{figure}[h]
\centering
\includegraphics[width=\columnwidth]{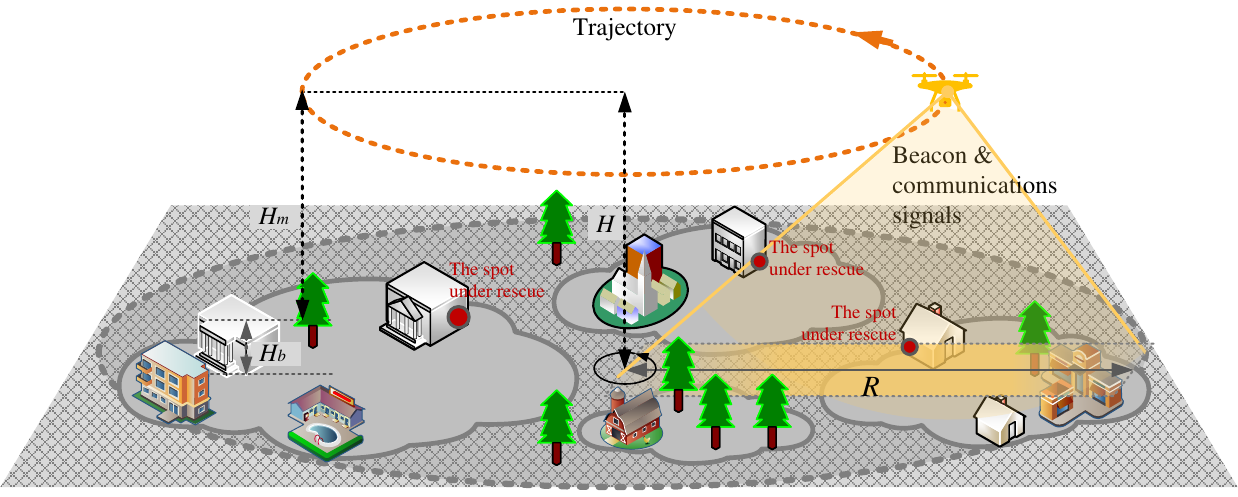}
\caption{An example of the coverage region topography. The red dots represent the possible locations of the to-be-rescued.}
\label{fig:scenarion_main}
\vspace{-6mm}
\end{figure}

\subsection{Signal Model Assumptions}
We assume that the received power at any location within the coverage area is dependent on two factors: (i) antenna gain and (ii) propagation loss. For antenna gain, we assume that the UAV is equipped with a directional antenna where the gain is given by \cite{belanis}
\vspace{-2mm}
\begin{equation}
    G(\phi,\varphi) = \Bigg\{\begin{array}{lr}
         \frac{G_0}{\Phi_a \phi_e} & -\frac{\pi}{2} \leq \phi_e \leq \frac{\pi}{2}, -\Phi_a \leq \varphi \leq \Phi_a \\
         g_0\simeq 0, & \text{otherwise}, 
    \end{array}
\end{equation}
where $G_0=\left(\frac{\sqrt{7500}\pi}{180}\right)^2$, and $\varphi, \phi_e$ denote the azimuth and elevation angle, respectively. With this antenna gain model, the ground area over which the received power is non-zero is represented by a rectangular strip. 

For propagation loss, we assume power loss during propagation is caused by two factors: (i) distance-based path loss and (ii) building/wall penetration loss. Assuming that the horizontal distance between the UAV and the receiver is $d$, the distance-based path loss is given as \cite{zeng3}
\vspace{-2mm}
\begin{equation}
    l_{db}(d, f_c)= \frac{c^2}{\left(4\pi f_c\sqrt{(d^2+H^2)}\right)^2},
\end{equation}
where $f_c$ (GHz) denotes the carrier frequency and $c$ represents the speed of light. The power loss caused by penetrations is largely dependent on the carrier frequency and the building material/thickness of the walls. In this paper, we assume that the penetration loss is given by \cite{ETSI} 
\vspace{-1mm}
\begin{equation}
    l_{p}(f_c) = 10^{.5+.4f_c}.
\end{equation}
\vspace{-2mm}
The total loss is thus given as 
\begin{equation}
    l_T(d,f_c)= l_{db}(d,f_c)\times l_p(f_c).
\end{equation}
Since we have assumed a mobile UAV, the horizontal distance, elevation angle and azimuth angle for the UAV with respect to any fixed location on the ground continuously change. Assuming that at time $t$, the UAV position is given by $\mathbf{u}(t)$, a ground location $\mathbf{x}$ has a horizontal distance $d_{\mathbf{x}}(t)=\|\mathbf{x}-\mathbf{u}(t)\|$ from the UAV\footnote{Note since $\mathbf{u}(t)$ is dependent on time, we write the distance as a function of time.}, and the corresponding azimuth, elevation angles are given by $\phi_e(\mathbf{x},t), \varphi(\mathbf{x},t)$, respectively. Then, the received signal power at location $\mathbf{x}$ can be written as (5) (shown at the top of this page),
\begin{figure*}
\begin{equation}
    P_r(\mathbf{x},t)=\Bigg\{\begin{array}{lr}\frac{PG(\phi_e(t),\varphi(t))}{l_T(t)}= \frac{Pl_p(f_c)G_0c^2}{\Phi_a|\phi_e(\mathbf{x},t)| \left(4\pi f_c\sqrt{(d_{\mathbf{x}}^2(t)+H^2)}\right)^2}, -\frac{\pi}{2} \leq \phi_e(\mathbf{x},t) \leq \frac{\pi}{2}, -\Phi_a \leq \varphi(\mathbf{x},t) \leq \Phi_a, & \\
    0, ~~~~~~ \text{otherwise}, & 
    \end{array}
\end{equation}
\begin{equation}
    R(\mathbf{x},t) = \Bigg\{\begin{array}{lr} \log_2\left(1+\frac{\gamma l_p(f_c)G_0c^2}{\Phi_a|\phi_e (\mathbf{x},t)| \left(4\pi f_c\sqrt{(d_{\mathbf{x}}^2(t)+H^2)}\right)^2}\right),-\frac{\pi}{2} \leq \phi_e(\mathbf{x},t) \leq \frac{\pi}{2}, -\Phi_a \leq \varphi(\mathbf{x},t) \leq \Phi_a, & \\
    0, ~~~ \text{otherwise}, &
    \end{array}
\end{equation}
\hrule
\end{figure*}
where $P$ is the transmit power of the UAV. Using the Shannon capacity formula, the corresponding data rate achievable at time $t$ can be written as (6) (shown at the top of this page), where $\gamma = \frac{P}{\sigma^2}$, and $\sigma^2$ is the additive white Gaussian noise variance at the receiver.

Next, assume that a particular location $\mathbf{x}$ starts and stops receiving non-zero power for the $n$-th time at $t_s^{n}(\mathbf{x})$ and $t_e^{n}(\mathbf{x})$, respectively. Then, the minimum total data rate that can be transferred to this location can be written as
\small
\begin{align}
    &\hat{R}(\mathbf{x}) = \sum_{n=1}^{N(\mathbf{x})}\int_{t_s^{n}(\mathbf{x})}^{t_e^{n}(\mathbf{x})}R(\mathbf{x},t)dt\nonumber \\ & \scalemath{.95}{=\sum_{n=1}^{N(\mathbf{x})}\int_{t_s^{n}(\mathbf{x})}^{t_e^{n}(\mathbf{x})}\log_2\left(1+\frac{\gamma l_p(f_c)G_0c^2}{\Phi_a|\phi (\mathbf{x},t)| \left(4\pi f_c\sqrt{(d_{\mathbf{x}}^2(t)+H^2)}\right)^2}\right)dt}.
\vspace{-3 cm}
\end{align}
\normalsize
As elaborated earlier, in our proposed localization scheme, accuracy depends on the total data successfully received at any location within the coverage area. According to (7), the amount of data transferred to a location $\mathbf{x}$ is exclusively dependent on the trajectory of the UAV. Therefore, in the following subsection, we formulate an optimization problem to find a suitable trajectory for the UAV such that the amount of data transferred to all the locations within the coverage area is higher than a given threshold.
\vspace{-3mm}
\subsection{Optimization Problem Formulation}
In this paper, our goal is to devise a trajectory for UAV so that the total mission completion time is minimized. With regard to the discussions in the previous subsections, the mission completion time minimization problem can be mathematically formulated as 

\textbf{P1:}
\vspace{-4mm}
\small
\begin{mini}|s|<b>
{T,\mathbf{u}(t),t_s^{n}(\mathbf{x}),t_e^{n}(\mathbf{x})}{T}{}{}
\addConstraint{C1:~ \hat{R}(\mathbf{x})\geq R_{th},}
\addConstraint{C2:~0\leq t_s^{n}(\mathbf{x})\leq T,~~ \forall ~~ \mathbf{x}\in \mathcal{Q},}
\addConstraint{C3:~0\leq t_e^{n}(\mathbf{x})\leq T,~~\forall ~~\mathbf{x}\in \mathcal{Q},}
\addConstraint{C4:~t_s^{n}(\mathbf{x})\leq t_e^{n}(\mathbf{x}),~~\forall~~\mathbf{x}\in \mathcal{Q},}
\addConstraint{C5:~\mathbf{u}(t)\in [\mathcal{R}^{1\times 2}, H], ~~ \forall~ t\in ~[0,T],}
\addConstraint{C6:~\frac{d\mathbf{u}(t)}{dt}\neq 0,~~ \forall~ t \in [0,T],}
\addConstraint{C7:~ N(\mathbf{x}) \in \mathcal{N} \setminus \{0\},}
\end{mini}
\normalsize
where $\mathcal{Q}$ denotes the set of all the points on the ground within the circular area centered at $\mathbf{c}_0$ with radius $R_0$, and $R_{th}$ is the minimum amount of data that must be received at all the locations within the scanning area. The value of $R_{th}$ can be chosen in such a way that the minimum required beacon sequences are transferred to any point $\mathbf{x} \in \mathcal{Q}$ within its connection time duration $\sum_{n=1}^{N(\mathbf{x})}t_e^n(\mathbf{x})-t_s^n(\mathbf{x})$.

In \textbf{P1}, the objective function is the completion time. Constraint C1 guarantees that the received data at all the points within the coverage area over the course of completion time is higher than a threshold so that the ToA-based positioning can be performed. Constraints C2 and C3 make sure that the start and end of the connection times for all the points within the coverage area are smaller than the total completion time. C4 ensures that the start of the connection time is smaller than the end of the connection time for all the points within the coverage area. Constraint C5 guarantees that the UAV moves on a 2D plane at height $H$ all the time. Finally, constraint C6 imposes that the UAV remains mobile for the whole duration of the mission.

Problem \textbf{P1} is challenging due to the following reasons. \textit{First}, finding simpler closed-form expressions for $\hat{R}(\mathbf{x},t)$, which can be utilized to get insights about the solution and the convexity of \textbf{P1}, is difficult. Hence, it is unclear whether \textbf{P1} is a convex optimization problem or not, leaving vagueness about any claims that can be made about the global/local optimality of the obtained solution. \textit{Second}, the number of constraints generated due to C1-C4 in \textbf{P1} is infinite owing to the continuous nature of the points in $\mathcal{Q}$. The presence of infinite number of constraints in \textbf{P1} makes it much more challenging than the communications based optimization problems considered in \cite{zeng1,zeng2,zeng3}, where a discrete set of ground users with known locations needs to be served. Although in a communications system scenario it is reasonable to assume the availability of location knowledge, this is not necessarily true in a rescue situation, where users can be located anywhere within a given area. \textit{Third}, the number of constraints generated due to C5-C6 is also infinite due to the continuous nature of the set $[0,T]$. Therefore, the total number of constraints in \textbf{P1} is infinite, rendering any solution approach based on exhaustive search computationally infeasible. Despite the above challenges posed by \textbf{P1}, we propose computationally efficient algorithms for solving \textbf{P1} in the next section.
\vspace{-4mm}

\section{Proposed Completion Time Minimization Scheme}
This section provides a low complexity solution for problem \textbf{P1} by simplifying it in two steps. A brief summary of these steps is given as follows. \textit{First}, we remove the complexity arising due to the infinite number of points within $\mathcal{Q}$ by using a lower bound on $\hat{R}(\mathbf{x},t)$. This is mainly achieved by considering the worst non-zero receiving power location within the scanning area at any given instant. \textit{Second}, we remove the complexity arising due to the infinite number of time instants within $[0,T]$ by limiting the path followed by the UAV to be circular. An illustration of the assumed UAV trajectory is provided in Fig. 1. With the circular trajectory limitation imposed, the remaining task is to find out the appropriate moving velocity during each time instant within the whole completion time and find the appropriate radius of the UAV's circular trajectory. 
\begin{figure}
\centering
  \includegraphics[width=\columnwidth]{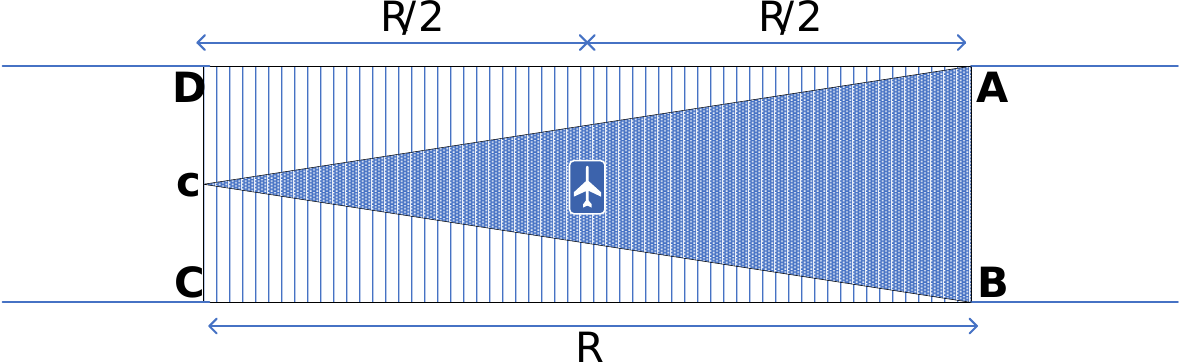}
  \caption{Top view for the coverage area of the UAV in Fig. 1 at time $t$. The angle $\angle \mathbf{A}\mathbf{c}\mathbf{B}=2\Phi_a$. Here, we assume $\mathbf{c}=[0,0,H]$ and $\mathbf{u}(t)=[\frac{R}{2},0,H]$. Area $\mathcal{A}$ is the rectangle formed by points $\mathbf{A,B,C}$ and $\mathbf{D}$ while the area $\mathcal{A}'$ is cone formed by points $\mathbf{A,c}$ and $\mathbf{B}$.}%
  \vspace{-2mm}
\end{figure}

\begin{figure}
\centering
  \includegraphics[width=0.85\columnwidth]{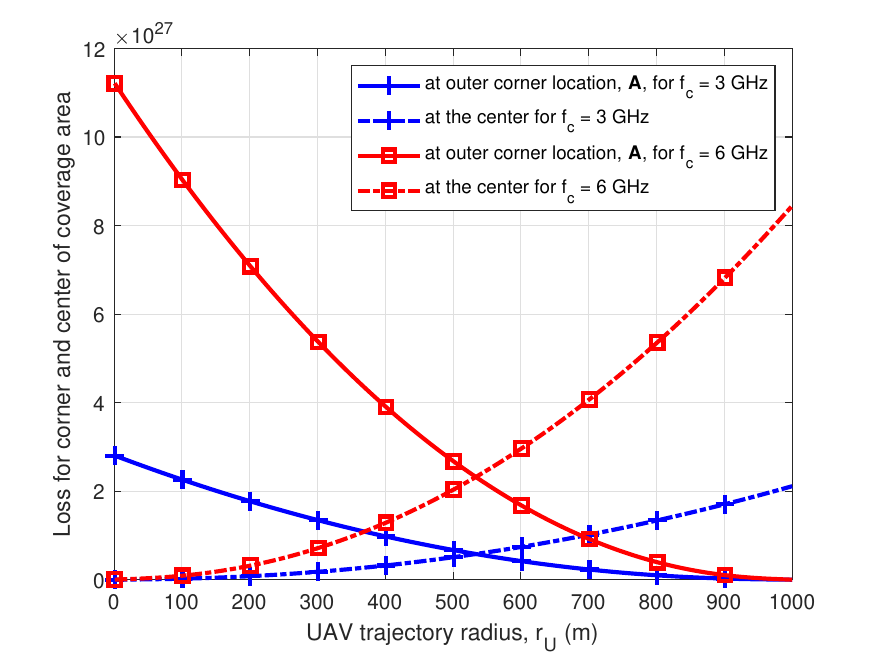}\\
  \caption{The path loss comparison for the worst point and the center point.}
  \vspace{-4mm}
\end{figure}

\vspace{-4mm}
\subsection{Full Coverage with a Single Circular Trajectory}
Without the loss of generality, consider time instant $t$, and denote the magnitude of the elevation angle corresponding to point $\mathbf{A}$ by $|\phi_e(\mathbf{A},t)|$. As illustrated in Fig. 2, due to the symmetry, the magnitude of the elevation angles corresponding to points $\mathbf{B},\mathbf{C},\mathbf{D}$ are also $|\phi_e(\mathbf{A},t)|$. Note that the antenna gain corresponding to each point within the shaded stripped area, $\mathcal{A}$, is at least $\frac{G_0}{|\phi_e(\mathbf{A},t)|\Phi_a}$. Moreover, the distance of each point within $\mathcal{A}$ from the UAV is smaller than the distance corresponding to point $\mathbf{A}$. Therefore, for any random point, $\hat{\mathbf{x}}$, within $\mathcal{A}$, we have 
\vspace{-3mm}
\begin{equation}
R(\hat{\mathbf{x}},t)\geq \log_2\left(1+\frac{\gamma l_p(f_c)G_0c^2}{|\phi_e(\mathbf{A},t)|\Phi_a \left(4\pi f_c\sqrt{(d_{\mathbf{A}}^2+H^2)}\right)^2}\right).
\end{equation}

Next, to remove the complexity arising due to $t_s(\hat{\mathbf{x}}),t_e(\hat{\mathbf{x}})$, consider the cone that is obtained by the rays that join points $\mathbf{c}$, $\mathbf{A}$, and points $\mathbf{c}$, $\mathbf{B}$. Then, our goal is to find $t_e(\hat{\mathbf{x}})-t_s(\hat{\mathbf{x}})$. Moreover, we assume that the UAV follows a circular trajectory of radius $r_U=\frac{R}{2}$ with angular velocity $v$. Then, for each point, $\hat{\mathbf{x}}$, in the conic area $\mathcal{A}' \subset \mathcal{A}$, it can be easily verified that 
\vspace{-4mm}
\begin{equation}
    t_e(\hat{\mathbf{x}})-t_s(\hat{\mathbf{x}})\geq \frac{\Phi_a}{v}, \forall~ \hat{\mathbf{x}}\in \mathcal{A}',
    \vspace{-3mm}
\end{equation}
and the lower bound on the data transferred to any point within $\mathcal{A}'$ is given as
\vspace{-4mm}
\begin{align}
\small
    & \hat{R}(\hat{\mathbf{x}})\! \geq\! \scalemath{.9}{\int_{t_s(\hat{\mathbf{x}})}^{t_e(\hat{\mathbf{x}})}\!\!\!\!\log_2\!\left(1\!+\!\frac{\gamma l_p(f_c)G_0c^2}{|\phi_e(\mathbf{A},t)|\Phi_a \left(4\pi f_c\sqrt{(d_{\mathbf{A}}^2+H^2)}\right)^2}\right)\!dt} \nonumber \\ &\geq\! \frac{\Phi_a}{v} \log_2\left(1\!+\!\frac{\gamma l_p(f_c)G_0c^2}{|\phi_e(\mathbf{A})|\Phi_a \left(4\pi f_c\sqrt{(d_{\mathbf{A}}^2+H^2)}\right)^2}\right), 
\end{align}
\normalsize
Thus, the problem \textbf{P1} is simplified to the single variable optimization problem as follows:

\textbf{P2:}
\vspace{-5mm}
\begin{maxi}|s|<b>
{v}{v}{}{}
\vspace{-8mm}
\addConstraint{\!C7:~\scalemath{.8}{\frac{\Phi_a}{v}\log_2\!\left(\!1\!+\!\frac{\gamma l_p(f_c)G_0c^2}{|\Phi_e(\mathbf{A})|\Phi_a \!\left(4\pi f_c\sqrt{(d_{\mathbf{A}}^2\!+\!H^2)}\right)^2}\!\right) \geq R_{th}}.}
\vspace{-2 cm}
\end{maxi}
Although the above problem is non-convex, it can be easily seen that the constraint function is a decreasing function of $v$. Hence, the optimal solution is obtained when the constraint is met with equality. Therefore, the optimal solution can be obtained in closed-form. In the following, we denote the optimal solution of \textbf{P2} by $v^{\textbf{P2}}$.
\vspace{-10mm}
\subsection{Optimal Radius for the Circular Trajectory with Fixed $v$}
\vspace{-4mm}
In this subsection, first we show that $\frac{R}{2}$ is not an optimal choice for $r_U$. Then, we illustrate how to find the optimal value of $r_U$. Consider the points $\mathbf{c}$ and $\mathbf{A}$ in Fig. 2, it is clear that if $r_U=\frac{R}{2}$ then $d_{\mathbf{A}}=d_{\mathbf{C}}> d_{\mathbf{c}}$ and $\Phi_e(\mathbf{A})=\Phi_e(\mathbf{C})=\Phi_e(\mathbf{c})$. This means that the received SNR at point $\mathbf{c}$ is greater than that at point $\mathbf{A}$ if we chose $r_U=\frac{R}{2}$. Moreover, if we increase the value of $r_U$ in the set $[\frac{R}{2},R]$, then the values of $d_{\mathbf{A}}$, $\Phi_e(\mathbf{A})$ decrease while $d_{\mathbf{c}}$, $\Phi_e(\mathbf{c})$ increase. Hence, the path loss by increasing $r_U$, the path loss for a point $\mathbf{A}$ decreases while for point $\mathbf{c}$ increases as shown in Fig. 3. Assuming that for some unique $ \hat{r}_U \in (\frac{R}{2},R]$ we have 
\begin{align}
|\phi_e(\hat{\mathbf{x}},\hat{r}_U)|&\Phi_a \!\left(4\pi f_c\sqrt{(d_{\hat{\mathbf{x}}}^2(\hat{r}_U)\!+\!H^2)}\right)^2 \nonumber \\
&\leq |\phi_e(\mathbf{A},\hat{r}_U)|\Phi_a \!\left(4\pi f_c\sqrt{(d_{\mathbf{A}}^2(\hat{r}_U)\!+\!H^2)}\right)^2 \nonumber \\ 
& =|\phi_e(\mathbf{c},\hat{r}_U)|\Phi_a \!\left(4\pi f_c\sqrt{(d_{\mathbf{c}_0}^2(\hat{r}_U)\!+\!H^2)}\right)^2 \nonumber \\ 
& < |\phi_e(\mathbf{A})|\Phi_a \left(4\pi f_c\sqrt{(d_{\mathbf{A}}^2+H^2)}\right)^2,~ \forall ~\hat{\mathbf{x}} \in \mathcal{A}'.
\end{align}
Then, for any fixed value of $v$ we have
\begin{align}
&\frac{\Phi_a}{v}\log_2\Bigg(\!1\!+\!\frac{\gamma l_p(f_c)G_0c^2}{|\phi_e(\hat{\mathbf{x}},\hat{r}_U)|\Phi_a \!\left(4\pi f_c\sqrt{(d_{\hat{\mathbf{x}}}^2(\hat{r}_U)\!+\!H^2)}\right)^2}\!\Bigg) \nonumber \\ & \geq \frac{\Phi_a}{v}\log_2\left(\!1\!+\!\frac{\gamma l_p(f_c)G_0c^2}{|\phi_e(\mathbf{A},\hat{r}_U)|\Phi_a \!\left(4\pi f_c\sqrt{(d_{\mathbf{A}}^2(\hat{r}_U)\!+\!H^2)}\right)^2}\!\right)\nonumber \\
    & = \frac{\Phi_a}{v}\log_2\left(\!1\!+\!\frac{\gamma l_p(f_c)G_0c^2}{|\phi_e(\mathbf{c},\hat{r}_U)|\Phi_a \!\left(4\pi f_c\sqrt{(d_{\mathbf{c}_0}^2(\hat{r}_U)\!+\!H^2)}\right)^2}\!\right) \! \nonumber \\ &> \!\frac{\Phi_a}{v}\!\log_2\!\left(\!\!1\!+\!\frac{\gamma l_p(f_c)G_0c^2}{|\phi_e(\mathbf{A})|\Phi_a \!\left(4\pi f_c\sqrt{(d_{\mathbf{A}}^2\!+\!H^2)}\right)^2}\!\right).
\end{align}
\normalsize

Hence, the corresponding optimization problem for finding optimal $r_U$ for any fixed value of $v$ can be written as the following feasibility problem.

\textbf{P3:}
\vspace{-3mm}
\begin{maxi}|s|<b>
{r_U}{0}{}{}
\vspace{-4mm}
\addConstraint{\scalemath{.8}{\frac{\Phi_a}{v}\log_2\left(1+\frac{\gamma l_p(f_c)G_0c^2}{|\phi_e(\mathbf{A},r_U)|\Phi_a\left(4\pi f_c\sqrt{(d_{\mathbf{A}}^2(r_U)+H^2)}\right)^2}\right)
\geq R_{th}},}
\addConstraint{\frac{R_0}{2}\leq r_U\leq \hat{r}_U.}
\end{maxi}
\normalsize
As the feasible set for \textbf{P3} is larger than that of \textbf{P2}, the value of $v$ can be increased even further than the optimal solution of \textbf{P2} if the optimization on $r_U$ is also performed.
\vspace{-2mm}
\subsection{Joint Optimization of $r_U$ and $v$}
With the help of the above analysis, the optimization problem for finding the optimal values of $v$ and $r_U$ can be formulated as follows

\textbf{P4}:
\vspace{-3mm}
\begin{maxi}|s|<b>
{v, r_U}{v}{}{}
\addConstraint{\scalemath{.8}{C8:\frac{\Phi_a}{v}\!\log_2\!\left(\!1\!+\!\frac{\gamma l_p(f_c)G_0c^2}{|\phi_e(\mathbf{A},r_U)|\Phi_a\left(4\pi f_c\sqrt{(d_{\mathbf{A}}^2(r_U)\!+\!H^2)}\right)^2}\!\right)
\!\geq\! R_{th}},}
\addConstraint{\scalemath{.8}{C9:~|\phi_e(\mathbf{A},r_U)|\Phi_a \!\left(4\pi f_c\sqrt{(d_{\mathbf{A}}^2(r_U)\!+\!H^2)}\right)^2}} \addConstraint{}{~~~~=|\phi_e(\mathbf{c},r_U)|\Phi_a \!\left(4\pi f_c\sqrt{(d_{\mathbf{c}}^2(r_U)\!+\!H^2)}\right)^2,}
\addConstraint{\scalemath{.8}{C10:~\frac{R}{2}\leq r_U \leq R.}}
\end{maxi}
As pointed out earlier, there is a unique value of $r_U=\hat{r}_U$ for which constraint C9 is satisfied. Therefore, \textbf{P3} can be equivalently written as:

\textbf{P5:}
\vspace{-3mm}
\begin{maxi}|s|<b>
{v}{v}{}{}
\addConstraint{\!\scalemath{.8}{C11:\frac{\Phi_a}{v}\!\log_2\!\left(\!1\!+\!\frac{\gamma l_p(f_c)G_0c^2}{|\phi_e(\mathbf{A},\hat{r}_U)|\Phi_a \!\left(4\pi f_c\sqrt{(d_{\mathbf{A}}^2(\hat{r}_U)\!+\!H^2)}\right)^2}\!\right)\! \geq\! R_{th}},}
\end{maxi}

\begin{algorithm}[t]
\caption{Finding optimal value of $r_U$ and $v$ of minimized the total completion time.}
\KwInput{$\mathbf{c}, \mathbf{A}, R$.}
\KwOutput{$r_U, v$.}
Obtain $\hat{r}_U=r_U$ through Bisection search such that the constraint C9 is satisfied\;
Obtain $v$ through Bisection search such that constraint C11 is met with equality\;
\end{algorithm}
\vspace{0mm}
Next, by using the inequality (12) and the fact that C11 must be active for optimality, we can establish that $v^{\textbf{P4}}\geq v^{\textbf{P2}}$. In \textbf{Algorithm 1}, we summarize the steps needed to obtain the optimal radius and velocity of the UAV to minimize the total completion time. Furthermore, in order to analyze the effect of $r_U$ on the optimal velocity (and subsequently the completion time), we consider removing constraint C9 from \textbf{P4} by using a fixed value of $r_U$ from the set $(\frac{R}{2}, \hat{r}_U)$. Mathematically, the modified problem can be written as

\textbf{P4-mod}-$r_U$:
\vspace{-4mm}
\begin{maxi}|s|<b>
{v}{v}{}{}
\addConstraint{~C8}
\end{maxi}
The following lemma provides a result about the optimal solutions of problem \textbf{P4-mod}-$r_U$.
\begin{lemma}
The optimal solution of \textbf{P4-mod}-$r_U$ is an increasing function of $r_U$. Mathematically, we have
\begin{equation}
    v^{\textbf{P4-mod}-r_U^1}\geq v^{\textbf{P4-mod}-r_U^2},
\end{equation}
whenever $r_U^1 \geq r_U^2$ and hence the optimal completion time is a decreasing function of $r_U$ when $r_U$ is constrained within the set $(\frac{R}{2},\hat{r}_U).$
\end{lemma}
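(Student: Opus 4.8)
The plan is to eliminate the maximization over $v$ in closed form and then read off the monotonicity in $r_U$. First note that, for fixed $r_U$, the left-hand side of constraint C8 equals $\frac{\Phi_a}{v}$ multiplied by a strictly positive quantity that does not depend on $v$, hence it is strictly decreasing in $v$; since \textbf{P4-mod}-$r_U$ maximizes $v$, the constraint must be active at the optimum, which gives
\begin{equation}
v^{\textbf{P4-mod}-r_U}=\frac{\Phi_a}{R_{th}}\,\log_2\!\left(1+\frac{\gamma l_p(f_c)G_0c^2}{g(r_U)}\right),\qquad g(r_U):=|\phi_e(\mathbf{A},r_U)|\,\Phi_a\!\left(4\pi f_c\sqrt{d_{\mathbf{A}}^2(r_U)+H^2}\right)^{2}.
\end{equation}
Because $x\mapsto\log_2\!\bigl(1+\gamma l_p(f_c)G_0c^2/x\bigr)$ is strictly decreasing on $x>0$, it suffices to prove that $g$ is non-increasing (indeed strictly decreasing) in $r_U$ on $\bigl(\tfrac{R}{2},\hat{r}_U\bigr)$.

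For this I would invoke the geometric facts already recorded in the paragraph preceding \textbf{P3}: increasing $r_U$ over $\bigl[\tfrac{R}{2},R\bigr]$ moves the UAV closer to the worst edge point $\mathbf{A}$, so both the horizontal distance $d_{\mathbf{A}}(r_U)$ and the elevation-angle magnitude $|\phi_e(\mathbf{A},r_U)|$ are non-increasing in $r_U$ (this is the ``the path loss for a point $\mathbf{A}$ decreases'' observation). Hence every factor of $g(r_U)=|\phi_e(\mathbf{A},r_U)|\,\Phi_a(4\pi f_c)^2\bigl(d_{\mathbf{A}}^2(r_U)+H^2\bigr)$ is non-increasing and positive on that interval, so $g$ is non-increasing there, and strictly decreasing for $r_U<\hat{r}_U$ by the strict inequality defining $\hat{r}_U$. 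Substituting into the displayed formula yields $v^{\textbf{P4-mod}-r_U^{1}}\ge v^{\textbf{P4-mod}-r_U^{2}}$ whenever $r_U^{1}\ge r_U^{2}$, which is the first claim.

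Finally, in the single-circular-trajectory design the UAV sweeps its circle at angular velocity $v$, so the mission completion time is a fixed multiple of $1/v$ (one revolution, $T=2\pi/v$, already suffices since the rotating cone $\mathcal{A}'$ then covers the whole disk with $N(\mathbf{x})\ge1$). Thus $T$ is strictly decreasing in $v$, and combined with the first claim the optimal completion time of \textbf{P4-mod}-$r_U$ is a decreasing function of $r_U$ on $\bigl(\tfrac{R}{2},\hat{r}_U\bigr)$. I expect the only delicate point to be the monotonicity of $g$: one must make sure $\mathbf{A}$ stays the binding worst location throughout $\bigl(\tfrac{R}{2},\hat{r}_U\bigr)$ --- which is exactly why the lemma caps $r_U$ at $\hat{r}_U$, the radius past which the centre $\mathbf{c}$ takes over as the worst point --- and that the elevation-angle magnitude genuinely decreases rather than merely staying bounded; both facts are inherited from the construction leading to \textbf{P4}, so no new estimate is needed.
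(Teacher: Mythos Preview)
Your proof is correct and follows essentially the same approach as the paper's: both observe that C8 must be active at the optimum (since its left-hand side is strictly decreasing in $v$) and that the left-hand side of C8 is increasing in $r_U$ on $(\tfrac{R}{2},\hat{r}_U)$ because $d_{\mathbf{A}}(r_U)$ and $|\phi_e(\mathbf{A},r_U)|$ both decrease there. The only presentational difference is that you write out the closed form $v^{\textbf{P4-mod}-r_U}=\tfrac{\Phi_a}{R_{th}}\log_2(1+\gamma l_p(f_c)G_0c^2/g(r_U))$ and read off monotonicity directly, whereas the paper argues by exclusion that no $v\le v^{\textbf{P4-mod}-r_U^2}$ can satisfy C8 with equality when $r_U=r_U^1$; you also make the completion-time step $T=2\pi/v$ explicit, which the paper leaves implicit.
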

\begin{proof}
Please see Appendix A for proof.
\end{proof}
\vspace{-2mm}
Let us denote the completion time for $r_U=\hat{r}_U$ for a specific data threshold by $\hat{T}(R_{th})$ and for $r_U =\frac{R}{2}$ by $T^{\dagger}(R_{th})$. Then, in the following proposition, we discuss the effects of finite maximum UAV velocity on $T_{sav}(R_{th})\triangleq \hat{T}(R_{th})-T^{\dagger}(R_{th})$ with respect to the data threshold $R_{th}$.
\vspace{-2mm}
\begin{proposition}
For any value of $v_{max}$, $T_{sav}(R_{th})\geq 0$ for all values of $R_{th}$. Moreover, there is at most one continuous interval of $R_{th}$ where $T_{sav}(R_{th})$ increases linearly.   
\end{proposition}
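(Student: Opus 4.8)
The plan is to make $\hat{T}(R_{th})$ and $T^{\dagger}(R_{th})$ fully explicit and then read both assertions off the resulting piecewise-linear picture. The key observation is that, for a fixed trajectory radius $r_U$, the completion time depends on $R_{th}$ only through the angular velocity: the UAV must sweep a fixed angular extent of its circular trajectory for full coverage, so $T=\kappa/v$ for a positive geometric constant $\kappa$ that is independent of $R_{th}$ and the same for the two circular trajectories under comparison. As already exploited for \textbf{P2} and \textbf{P5}, the velocity is chosen by driving the relevant data constraint (C7 when $r_U=R/2$, C11 when $r_U=\hat{r}_U$) to equality, and the logarithmic term there contains neither $v$ nor $R_{th}$; hence the data-feasible velocity is $v^{\star}(R_{th},r_U)=\Phi_a B(r_U)/R_{th}$, where
\[
B(r_U)=\log_2\!\Big(1+\tfrac{\gamma\,l_p(f_c)\,G_0\,c^2}{|\phi_e(\mathbf{A},r_U)|\,\Phi_a\big(4\pi f_c\sqrt{d_{\mathbf{A}}^2(r_U)+H^2}\big)^2}\Big).
\]
Imposing the physical cap $v\le v_{max}$ gives $v=\min\{\Phi_a B(r_U)/R_{th},\,v_{max}\}$, so with $B_1:=B(\hat{r}_U)$ and $B_2:=B(R/2)$,
\[
\hat{T}(R_{th})=\kappa\max\Big\{\tfrac{R_{th}}{\Phi_a B_1},\,\tfrac{1}{v_{max}}\Big\},\qquad T^{\dagger}(R_{th})=\kappa\max\Big\{\tfrac{R_{th}}{\Phi_a B_2},\,\tfrac{1}{v_{max}}\Big\}.
\]

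For the sign assertion I would invoke the worst-point comparison established just before \textbf{P3}: inequality (12) shows that replacing $r_U=R/2$ by $r_U=\hat{r}_U$ strictly lowers the worst-point path loss, equivalently $B_1\ge B_2$ (strictly, by the strict inequality there). This is the same mechanism behind $v^{\textbf{P4}}\ge v^{\textbf{P2}}$ and extends the monotonicity of Lemma~1 to the endpoint $\hat{r}_U$. Then $R_{th}/(\Phi_a B_1)\le R_{th}/(\Phi_a B_2)$ for every $R_{th}\ge0$, and since $t\mapsto\max\{t,1/v_{max}\}$ is non-decreasing we conclude $\hat{T}(R_{th})\le T^{\dagger}(R_{th})$ for all $R_{th}$ and all $v_{max}$, which is the non-negativity asserted in the proposition. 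When $v_{max}$ is small enough that both curves rest on the floor $\kappa/v_{max}$, the inequality is an equality.

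With $T_{sav}=T^{\dagger}-\hat{T}$, each of $\hat{T},T^{\dagger}$ is the maximum of a constant and a ray through the origin, hence non-decreasing and piecewise linear with a single breakpoint — at $R_{th}=\Phi_a B_2/v_{max}$ for $T^{\dagger}$ and at $R_{th}=\Phi_a B_1/v_{max}\ (\ge\Phi_a B_2/v_{max})$ for $\hat{T}$. I would split $[0,\infty)$ at these two thresholds: below $\Phi_a B_2/v_{max}$ both curves are on the floor, so $T_{sav}\equiv0$ (the ``independent of $R_{th}$'' regime, which fills the whole practical range when $v_{max}$ is small); between the thresholds $\hat{T}$ is still on the floor while $T^{\dagger}$ is linear, so $T_{sav}(R_{th})=\kappa R_{th}/(\Phi_a B_2)-\kappa/v_{max}$; above $\Phi_a B_1/v_{max}$ both are linear, so $T_{sav}(R_{th})=\kappa(1/B_2-1/B_1)R_{th}/\Phi_a$. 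Hence $T_{sav}$ vanishes on an initial interval and increases linearly thereafter — its slope drops from $\kappa/(\Phi_a B_2)$ to the still-positive value $\kappa(1/B_2-1/B_1)/\Phi_a$ at the inner breakpoint (as $B_1>B_2$) but never returns to zero — so the set of $R_{th}$ on which $T_{sav}$ grows is the single interval $(\Phi_a B_2/v_{max},\,\infty)$, which is the asserted ``at most one continuous interval''.

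The third paragraph is routine bookkeeping with a maximum of two affine functions. The real work is the first step: justifying rigorously that, with the velocity cap active, the completion time equals $\kappa/\min\{\Phi_a B(r_U)/R_{th},\,v_{max}\}$ with $\kappa$ independent of $R_{th}$ — i.e.\ that the data constraint can always be met with equality unless the cap binds first, and that one revolution of the circle suffices for coverage so that the coverage sweep has the same fixed angular length for the radius-$R/2$ and radius-$\hat{r}_U$ trajectories. Once that completion-time model is in place, both parts of the proposition follow immediately.
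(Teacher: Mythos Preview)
Your approach is essentially the paper's: both derive $v^{\star}=\Phi_a\,B(r_U)/R_{th}$ from driving the data constraint to equality, invoke $B_1>B_2$ (the paper's $a_1>a_2$, coming from inequality~(12)) for the sign, take $T=2\pi/v$, and then analyze what happens when the cap $v_{max}$ binds. You are simply more explicit---writing the capped completion time as $\kappa\max\{R_{th}/(\Phi_a B),\,1/v_{max}\}$ and carrying out the three-regime piecewise-linear bookkeeping---whereas Appendix~B argues informally and stops after identifying the middle interval (where $r_U=R/2$ has come off the cap but $r_U=\hat r_U$ has not) as the one with ``a slope''. One thing your analysis makes visible that the paper glosses over: above the second threshold $T_{sav}$ \emph{continues} to increase linearly, the slope merely dropping from $\kappa/(\Phi_a B_2)$ to $\kappa(1/B_2-1/B_1)/\Phi_a>0$; so the ``at most one continuous interval of linear increase'' should be read as the single unbounded interval above the first threshold rather than just the middle piece. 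The paper's numerical discussion (Fig.~10 and the surrounding text, where it is the \emph{percentage} savings that becomes constant past the kink) suggests the proposition is phrased a bit loosely, but your argument handles the statement as written.
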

\vspace{-0.5 cm}
\begin{proof}
Please see Appendix B for proof.
\end{proof}
\vspace{-4mm}
\subsection{Adaptive Trajectory Design}
Although the scheme presented in the previous subsections achieves the objective of providing positioning service for the overall area of interest, it is not efficient. This can be explained as follows. The single circular trajectory design relies on providing sufficient SNR for the worse channel conditions anticipated at the ground users. This can increase the total completion time. The following toy example explains this. Assume that there is only one building present in the total coverage area. The scheme introduced in the previous subsection uses the channel conditions of the ground users present within this building, which is generally much poorer than the channel conditions of the users located outside the building, to find the optimal velocity for traversing the circular trajectory. This is not optimal since it will require the UAV to traverse the whole circular trajectory at a much lower speed in order to meet higher association time demands due to the poorer channel conditions used for the reference indoor ground user. A better design would be to traverse the area covered by the building at a lower speed so that the indoor ground users can receive a sufficient number of packets for their own positioning while traversing the rest of the area at a higher velocity.

Now, we proceed to devise a flexible trajectory design that considers building locations. To this end, the area occupied by the buildings is approximated by the smaller circular coverage areas. We denote the center and radius of each of the smaller circular areas by $\mathbf{c}_i$ and $r_i$, respectively. Then, the trajectory of the UAV is divided into two phases. Specifically, during the \textit{first} phase, the UAV traverses the whole coverage area by a circular trajectory. During this travel time, the goal is to provide positioning service for the outdoor ground users in the coverage area. During the \textit{second} phase, the UAV sequentially provides positioning service to the ground users in each of the smaller circular coverage areas by traversing a circular trajectory. A pictorial representation of the proposed design with only four smaller circular areas is illustrated in Fig. 4.

\begin{figure}[t]
\centering
\includegraphics[width=\columnwidth]{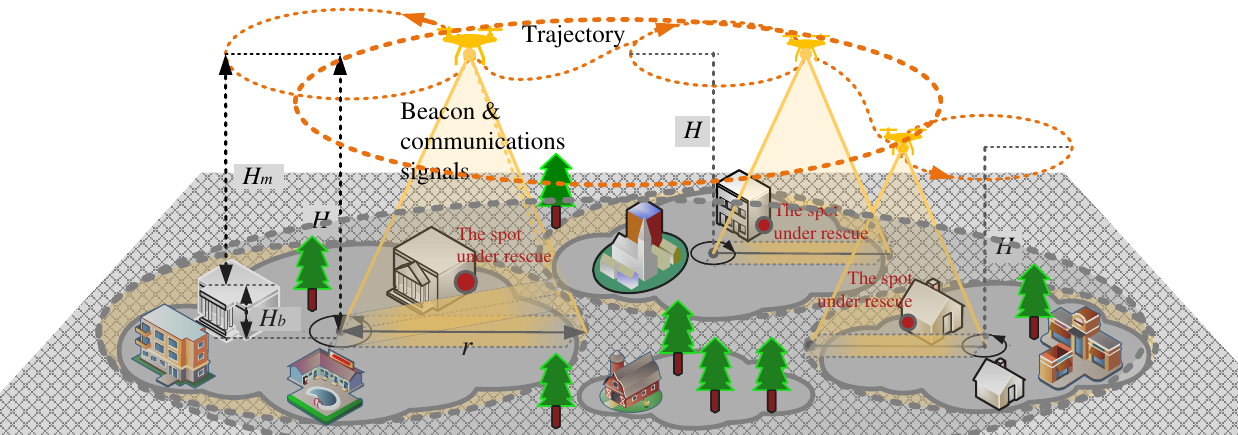}
\caption{Illustration of the trajectory design for the second phase with only four circular areas to be covered. The outer ground circles are the areas that need coverage. The elevated circles are the circular trajectory traversed by the UAV to provide positioning and communication services to the ground users located inside the corresponding circular regions. The curved lines are the trajectories traversed by the UAV after completing one black circular trajectory and before starting the next black circular.}
\vspace{-3mm}
\end{figure}    

The problem at hand is to devise the complete trajectory for the second phase so that the overall travel time is minimized while guaranteeing sufficient association time for each ground user in the coverage area. Note that the total time strongly depends on the UAV's order of visit at each smaller circular coverage area. Designing a globally optimal order of visits is a challenging problem. To understand this, let us assume that $\lim r_i\to 0, ~\forall~ i \in ~\{1,\cdots, I\}$. Then it is clear that the optimal trajectory design is at least as complex as the Traveling Salesman Problem (TSP).

\subsubsection{The Order of Visits of Areas}
For a set of locations and the cost of travel or distance between each possible pairs, the objective in TSP is to find the best possible route of visiting all the locations and returning to the starting point that minimizes the travel cost or travel distance. It can be shown that TSP is an NP-hard problem and the complexity of the solution algorithm grows with the number of locations. Although TSP is an NP-hard problem, several efficient algorithms have been proposed in the literature. Owing to the similarity between the TSP and the problem in this paper, we choose to use the order of visits to individual smaller circular trajectories obtained by solving the TSP with locations $\mathbf{c}_i$. Furthermore, without loss of generality, we denote the optimal order of visits obtained by solving TSP is given as 
\vspace{-2mm}
\begin{equation}
    \mathbf{c}^*_1 \to \mathbf{c}^*_2\to \cdots \to \mathbf{c}^*_I,
\end{equation}
where $\mathbf{c}_i\to \mathbf{c}_j$ means that the UAV first visits the circular area centered at $\mathbf{c}_i$ and then visits the area centered at $\mathbf{c}_j$.  
\subsubsection{Proposed Algorithm for a Given Order of Visits}

Having obtained the order of visits to the individual circular trajectories, the goal is to find the appropriate connecting points on the circular trajectories so that the overall time is minimized. Before proceeding further, we present an important result in the following lemma which will be helpful in our later discussions.

\begin{lemma}
For any given values of $\mathbf{c}_i\in \mathbb{R}^2,r_i$, and the order of visits to the smaller circular coverage areas, the total size of the travelled distance by the UAV is minimized when the individual circular trajectories associated with each $\mathbf{c}_i$ are connected with straight line segments.
\end{lemma}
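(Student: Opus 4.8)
The plan is to prove the statement by reducing the full path-optimization to a family of independent ``connector'' subproblems, one for each consecutive pair of circular trajectories in the prescribed order, and then invoking the triangle inequality on each. First I would fix the order of visits $\mathbf{c}^*_1 \to \mathbf{c}^*_2 \to \cdots \to \mathbf{c}^*_I$ and observe that any admissible second-phase trajectory consists of $I$ full loops around the respective circles (each loop being forced once the UAV enters a circle, by the coverage requirement established earlier) interleaved with $I-1$ connecting arcs: the $n$-th connecting arc leaves the circle centered at $\mathbf{c}^*_n$ at some exit point $\mathbf{p}_n$ on that circle and reaches the circle centered at $\mathbf{c}^*_{n+1}$ at some entry point $\mathbf{q}_{n+1}$ on that circle. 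Since the length contributed by the $I$ loops is fixed (it is $\sum_i 2\pi r_i$, independent of how the connectors are chosen), minimizing the total travelled distance is equivalent to minimizing the sum of the lengths of the $I-1$ connecting curves.

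Next I would decouple the connectors. The key structural point is that the choice of exit point $\mathbf{p}_n$ and entry point $\mathbf{q}_{n+1}$ for the $n$-th connector is not linked across different $n$ except through the points on the shared circles: the curve from circle $n$ to circle $n+1$ and the curve from circle $n+1$ to circle $n+2$ share only the fact that both have an endpoint on circle $n+1$, but since the UAV performs a full loop of circle $n+1$ in between, it can reposition along that circle at no extra ``net'' cost beyond the fixed loop length --- more precisely, the loop already visits every point of circle $n+1$, so we may take the connector-$n$ arrival point and the connector-$(n+1)$ departure point to be arbitrary points of circle $n+1$ chosen independently. Hence the problem separates into $I-1$ independent problems of the form: given two disjoint circles, find the shortest curve joining a point of the first to a point of the second. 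For each such subproblem, for \emph{any} fixed pair of endpoints $(\mathbf{p}_n,\mathbf{q}_{n+1})$ the shortest curve between them is the straight line segment (this is just the defining property of Euclidean geodesics / the triangle inequality applied to any polygonal approximation of an arbitrary rectifiable connecting curve), so the optimal connector is a straight segment; one then additionally optimizes over the endpoints, but the shape is already pinned down to be a segment.

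Then I would assemble these facts: the total distance equals $\sum_{i=1}^{I} 2\pi r_i + \sum_{n=1}^{I-1} \ell_n$, where $\ell_n$ is the length of the $n$-th connector, and each $\ell_n$ is minimized (for any admissible endpoints, hence for the jointly optimal ones) precisely when the $n$-th connector is a straight line segment. Therefore a globally distance-minimizing admissible trajectory has all $I-1$ connectors realized as straight line segments, which is exactly the claim. I would phrase the write-up so that the segment endpoints themselves are left as free parameters to be determined by the next stage of the algorithm (the ``connecting points'' optimization referenced just before the lemma), keeping this lemma purely about the \emph{shape} of the optimal connectors.

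The main obstacle --- and the step that needs the most care --- is justifying the decoupling rigorously: one must argue that an arrival point on circle $n+1$ used by connector $n$ and a departure point on circle $n+1$ used by connector $n+1$ can indeed be chosen independently, i.e., that forcing the UAV to also traverse the full circle $n+1$ in between does not create a coupling that could make a non-straight connector advantageous. The clean way to handle this is to note that the full loop of circle $n+1$ is mandatory and its length $2\pi r_{n+1}$ is a constant that does not depend on where on the circle the UAV enters or leaves; consequently, regardless of connector $n$'s endpoint on circle $n+1$, the UAV can be taken to start its loop there and finish its loop at the point where connector $n+1$ departs, with the in-between travel being exactly one circumference either way. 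Once this observation is in place, the triangle-inequality argument for each connector is routine and the lemma follows. A secondary, minor subtlety is the boundary behavior --- whether the connector endpoints are required to lie exactly on the circles rather than, say, being tangent --- but since the coverage loops already lie on the circles, taking the connector endpoints on the circles is without loss of generality and does not affect the straight-line conclusion.
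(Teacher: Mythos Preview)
Your core argument---fix the connector endpoints and invoke the Euclidean geodesic/triangle inequality to conclude each connector must be a straight segment---is exactly the paper's argument, so the essential approach matches. The paper phrases it as a one-line contradiction: for fixed $\mathbf{p}_e^i,\mathbf{p}_s^{i+1}$, any non-straight path between them can be replaced by the segment without touching the other terms of the length decomposition.

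Where you diverge is in the surrounding scaffolding. The paper decomposes the total length as $L=\sum_i C_i+\sum_i l_i+\sum_i m_i$, keeping an explicit term $m_i$ for the (straight) travel between the entry point $\mathbf{p}_s^i$ and exit point $\mathbf{p}_e^i$ on the same circle; it then proves Lemma~2 purely about the $l_i$'s and defers ``$\mathbf{p}_s^i=\mathbf{p}_e^i$'' (i.e.\ $m_i=0$) to the next lemma. Your write-up instead folds that issue into a decoupling argument, asserting that because the UAV does a full loop it can enter and leave circle $n{+}1$ at arbitrary, independently chosen points ``with the in-between travel being exactly one circumference either way.'' That last clause is not correct: if entry and exit differ, covering the whole circle and ending at the exit point costs strictly more than $2\pi r_{n+1}$ (this surplus is precisely the paper's $m_i$). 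The inaccuracy does not damage the Lemma~2 conclusion---your fixed-endpoint straight-line step stands on its own---but it does mean your decoupling paragraph is both unnecessary here and, as stated, slightly wrong. If you want to keep it, replace the ``exactly one circumference'' claim with the three-term decomposition and note that optimizing the $l_i$ shape is independent of the $m_i$ question; otherwise simply drop the decoupling discussion and let the subsequent lemma handle the entry/exit coincidence.
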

\vspace{-0.5 cm}
\begin{proof}
Please see Appendix C for proof.
\end{proof}
\vspace{-0.25 cm}
Lemma 2 establishes that any consecutive circular trajectories of the UAV must be connected with the shortest path possible. However, it does not elaborate on the optimal starting, $\mathbf{p}_s^i$, and leaving points, $\mathbf{p}_e^i$, on the circular trajectory corresponding to $\mathbf{c}_i$. The following lemma settles this argument.

\begin{lemma}
The optimal starting and leaving points on the circular trajectory corresponding to $\mathbf{c}_i$ are same.
\end{lemma}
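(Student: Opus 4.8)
The plan is to prove this by a local exchange argument: starting from \emph{any} trajectory that services the small circular regions in the fixed order $\mathbf{c}^*_1\to\cdots\to\mathbf{c}^*_I$ (with straight connecting segments, as already guaranteed by Lemma~2), I would show that forcing the entry point $\mathbf{p}_s^i$ to coincide with the leaving point $\mathbf{p}_e^i$ on every circle can only decrease the total completion time $T$. Because the modification on one circle does not interact with the others, it suffices to establish the claim for a single generic circle $C_i$ (center $\mathbf{c}_i$, radius $r_i$) with its two neighbours held fixed, and then iterate over $i=1,\dots,I$.

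First I would decompose $T$ into three types of contributions: (i) the mandatory full revolution over each $C_i$, of length $2\pi r_i$, flown at the slow speed dictated by the per-region data-threshold constraint $R_{th}$ (this speed and length are fixed, independent of where the UAV enters or leaves $C_i$, by rotational symmetry, hence contribute a constant); (ii) the connecting straight segments, whose lengths are $\|\mathbf{p}_e^{i-1}-\mathbf{p}_s^{i}\|$ and which are flown at $v_{max}$; and (iii) whenever $\mathbf{p}_s^i\neq\mathbf{p}_e^i$, an \emph{additional} arc of $C_i$ that the UAV is forced to retrace (on top of the full revolution) in order to reach $\mathbf{p}_e^i$ after having covered all of $C_i$ starting from $\mathbf{p}_s^i$; call its length $\delta_i$. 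The key geometric observation is that this extra arc joins $\mathbf{p}_s^i$ to $\mathbf{p}_e^i$ along $C_i$, so $\delta_i\ge\|\mathbf{p}_s^i-\mathbf{p}_e^i\|$ (the chord is no longer than any arc subtending it), with equality only when $\mathbf{p}_s^i=\mathbf{p}_e^i$; moreover this retraced arc is flown at speed at most $v_{max}$.

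The exchange step then replaces $\mathbf{p}_s^i$ by $\mathbf{p}_e^i$, leaving everything else untouched. This changes exactly two terms: the incoming segment length becomes $\|\mathbf{p}_e^{i-1}-\mathbf{p}_e^i\|$ instead of $\|\mathbf{p}_e^{i-1}-\mathbf{p}_s^i\|$, and the extra arc $\delta_i$ disappears. By the triangle inequality, $\|\mathbf{p}_e^{i-1}-\mathbf{p}_e^i\|\le\|\mathbf{p}_e^{i-1}-\mathbf{p}_s^i\|+\|\mathbf{p}_s^i-\mathbf{p}_e^i\|\le\|\mathbf{p}_e^{i-1}-\mathbf{p}_s^i\|+\delta_i$, so the \emph{length} added to the incoming segment is at most $\delta_i$; since both the incoming segment (at $v_{max}$) and the removed arc (at speed $\le v_{max}$) are no faster than $v_{max}$, the time saved by deleting the arc is at least the time possibly lost on the lengthened segment, hence $T$ does not increase. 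Applying this to every $i$ produces a trajectory with $\mathbf{p}_s^i=\mathbf{p}_e^i$ for all $i$ and $T$ no larger than before, proving the lemma; and since the two triangle inequalities above are simultaneously tight only when $\mathbf{p}_s^i=\mathbf{p}_e^i$, any trajectory with $\mathbf{p}_s^i\neq\mathbf{p}_e^i$ for some $i$ is in fact strictly suboptimal.

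The main obstacle is the bookkeeping behind items (i)--(iii): making rigorous that when entry and leaving points differ the UAV genuinely has to fly a full revolution \emph{plus} an arc of length at least $\|\mathbf{p}_s^i-\mathbf{p}_e^i\|$ (it cannot cover all of $C_i$ and terminate at a different point of $C_i$ without such an overshoot, assuming no backtracking), and that the speed on this retraced arc, and on the connecting segments, is bounded by $v_{max}$ so that the arc's time cost cannot be beaten by the segment saving. A secondary technicality is degenerate geometry (overlapping small circles, or a connecting segment passing through a neighbouring disk); the argument is robust to this since it only uses that the extra piece on $C_i$ links $\mathbf{p}_s^i$ to $\mathbf{p}_e^i$, but it is worth flagging explicitly.
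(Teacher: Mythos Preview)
Your argument is correct and uses essentially the same idea as the paper's proof: a local exchange that replaces the entry point $\mathbf{p}_s^i$ by the leaving point $\mathbf{p}_e^i$ and appeals to the triangle inequality $\|\mathbf{p}_e^{i-1}-\mathbf{p}_e^{i}\|\le\|\mathbf{p}_e^{i-1}-\mathbf{p}_s^{i}\|+\|\mathbf{p}_s^{i}-\mathbf{p}_e^{i}\|$ to show the total path (hence time) does not increase. The paper simply takes the ``extra'' piece $m_i$ between $\mathbf{p}_s^i$ and $\mathbf{p}_e^i$ to be the straight segment already, whereas you allow it to be a retraced arc and first bound it below by the chord; your version is therefore slightly more careful about the model and about speeds, but the mechanism is the same.
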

\vspace{-0.5 cm}
\begin{proof}
Please see Appendix D for proof.
\end{proof}
\vspace{-2mm}
Lemma 3 shows that all the $m_i$'s, where $m_i$ is defined as the length of the straight line path taken by the UAV for travelling between $\bold{p}_s^i$ and $\bold{p}^i_e$, must be zero in the optimal solution. Therefore, the overall length of the path can be expressed as
\vspace{-2mm}
\begin{equation}
    L=\sum_{i=1}^IC_i+\sum_{i=1}^{I-1}l_i.
\end{equation}

Recall that once the values of $r_i$'s are fixed, the time needed to cover the corresponding circular trajectories is also fixed. Hence, the only option available for minimizing the overall completion time is to minimize $\sum_{i=1}^{I-1}l_i$. In other words the question at hand can be written as, \textit{how to connect the circular trajectories such that $\sum_{i=1}^{I-1}l_i$ is minimized?}

As pointed out earlier, the problem of minimizing $\sum_{i=1}^{I-1}l_i$ becomes a TSP problem in the limit when $r_i\to 0$. Therefore, in the following, we propose an iterative algorithm which can be proved to obtain a non-increasing objective value of $\sum_{i=1}^{I-1}l_i$ in each successive iteration. Thus, the convergence of the proposed algorithm is guaranteed.

Before presenting the technical details of the proposed algorithm, we describe the general idea behind the proposed approach. In the proposed algorithm, at the $i$-th iteration, our goal is to minimize the distance traveled for connecting three circular trajectories of fixed radii, $r_{i-1}, r_{i}, r_{i+1}$ centered at locations $\mathbf{c}_{i-1}, \mathbf{c}_i, \mathbf{c}_{i+1}$, respectively. During each iteration, the minimization is achieved only by optimizing the connecting point on the middle circle, that is $i$-th circle, while the connecting points on all the remaining circles are left unchanged. Thus, the rest of the trajectory lengths are not altered. By doing so, only $l_i+l_{i+1}$ is minimized while the rest of the sum in $\sum_{i=1}^{I-1}l_i$ remains unaltered. Therefore, in each iteration we achieve a lower value of $\sum_{i=1}^{I-1}l_i$ as compared to the previous iteration. This, thus guarantees the convergence of the proposed iterative algorithm.

Without loss of generality, let us assume that the order of visits suggested by TSP algorithm is such that the circular trajectory corresponding to $\mathbf{c}_i$ is visited before that of the $\mathbf{c}_j$, whenever $i<j$. Furthermore, denote by $\mathbf{n}_i^j$ as the closest point on the $i$-th circular trajectory to the end point on the $j$-th circular trajectory. Then, we have the following lemma.

\begin{lemma}
The optimal connecting point, that minimizes $l_i+l_{i+1}$, on the $i$-th circular trajectory lies on the smaller circular arc that joins the points $\mathbf{n}_i^{i-1}$ and $\mathbf{n}_i^{i+1}$.
\end{lemma}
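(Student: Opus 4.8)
The plan is to argue by contradiction: suppose the optimal connecting point $\mathbf{p}_i^*$ on the $i$-th circular trajectory lies strictly outside the minor arc joining $\mathbf{n}_i^{i-1}$ and $\mathbf{n}_i^{i+1}$, i.e.\ on the complementary (major) arc, and then exhibit a point on the minor arc that achieves a strictly smaller value of $l_i + l_{i+1}$, contradicting optimality. Here $l_i = \|\mathbf{p}_{i-1}^* - \mathbf{p}_i\|$ and $l_{i+1} = \|\mathbf{p}_i - \mathbf{p}_{i+1}^*\|$, where $\mathbf{p}_{i-1}^*$ and $\mathbf{p}_{i+1}^*$ are the (fixed, by the iterative scheme) connecting points on the neighbouring circles, and $\mathbf{n}_i^{i-1}$, $\mathbf{n}_i^{i+1}$ are the orthogonal projections of $\mathbf{p}_{i-1}^*$ and $\mathbf{p}_{i+1}^*$ onto the circle of radius $r_i$ centred at $\mathbf{c}_i$.

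First I would set up coordinates with $\mathbf{c}_i$ at the origin and parametrise the $i$-th circle by angle $\theta \mapsto \mathbf{p}_i(\theta) = \mathbf{c}_i + r_i(\cos\theta,\sin\theta)$; then $l_i + l_{i+1}$ becomes a scalar function $g(\theta)$ to be minimised over $\theta \in [0,2\pi)$. The two projection points $\mathbf{n}_i^{i-1}$ and $\mathbf{n}_i^{i+1}$ correspond to the angular directions of $\mathbf{p}_{i-1}^* - \mathbf{c}_i$ and $\mathbf{p}_{i+1}^* - \mathbf{c}_i$ respectively; these two directions cut the circle into a minor arc and a major arc. Next I would show that $g$ has no stationary point on the open major arc. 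Differentiating, $g'(\theta) = \langle \mathbf{p}_i'(\theta), \widehat{\mathbf{p}_i - \mathbf{p}_{i-1}^*} + \widehat{\mathbf{p}_i - \mathbf{p}_{i+1}^*}\rangle$, where $\widehat{\mathbf{v}} = \mathbf{v}/\|\mathbf{v}\|$; a stationary point requires the tangent $\mathbf{p}_i'(\theta)$ to be orthogonal to the sum of the two unit vectors pointing from $\mathbf{p}_{i-1}^*$ and from $\mathbf{p}_{i+1}^*$ toward $\mathbf{p}_i$ — equivalently, that the angle bisector condition (a reflection/Snell-type law) holds. The key geometric fact is that on the major arc, both unit vectors $\widehat{\mathbf{p}_i - \mathbf{p}_{i-1}^*}$ and $\widehat{\mathbf{p}_i - \mathbf{p}_{i+1}^*}$ have a strictly positive component along the outward radial direction $(\cos\theta,\sin\theta)$ that dominates, forcing $g'(\theta) \neq 0$ and in fact forcing $g$ to be strictly monotone toward the nearer endpoint of the major arc; hence the minimum over the closed major arc is attained at one of its endpoints $\mathbf{n}_i^{i-1}$ or $\mathbf{n}_i^{i+1}$, which lie on the minor arc as well. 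Combined with the fact that the global minimiser of the continuous $g$ on the compact circle exists, this shows it must lie on the (closed) minor arc.

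The main obstacle I anticipate is making the "radial component dominates" claim rigorous without case-splitting on the relative positions and distances of $\mathbf{p}_{i-1}^*$ and $\mathbf{p}_{i+1}^*$. The cleanest route is probably to avoid derivatives of the nonconvex $g$ altogether and instead use a direct comparison/reflection argument: given any $\mathbf{p}_i$ on the major arc, let $\mathbf{q}$ be the point where the segment from $\mathbf{p}_i$ to the nearer of $\mathbf{n}_i^{i-1}$, $\mathbf{n}_i^{i+1}$ first re-enters — more simply, project $\mathbf{p}_i$ onto the chord and use convexity of the disk together with the fact that $\mathbf{p}_{i-1}^*$ and $\mathbf{p}_{i+1}^*$ both lie on the same side of (or on) the line through $\mathbf{n}_i^{i-1}$ and $\mathbf{n}_i^{i+1}$ that contains the minor arc. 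One then invokes the elementary fact that moving a point toward the foot of a perpendicular decreases distance to any point on the far side of that perpendicular hyperplane. I would carry this out by (i) identifying the chord $[\mathbf{n}_i^{i-1}, \mathbf{n}_i^{i+1}]$ and the half-plane $\mathcal{H}$ bounded by its supporting line that contains the minor arc; (ii) verifying $\mathbf{p}_{i-1}^*, \mathbf{p}_{i+1}^* \notin \operatorname{int}\mathcal{H}$ using that $\mathbf{n}_i^{i-1}$, $\mathbf{n}_i^{i+1}$ are the respective nearest points of the circle; and (iii) showing that reflecting any major-arc point across that line, then radially retracting back onto the circle into the minor arc, does not increase either $l_i$ or $l_{i+1}$, with strict decrease unless the point was already on the minor arc. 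Step (iii) is the crux; the rest is routine planar geometry.
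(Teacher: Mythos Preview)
Your overall strategy (contradiction: given a point on the major arc, produce a point on the minor arc with smaller $l_i+l_{i+1}$) is sound, but step~(ii) is stated with the wrong sign, and this breaks the chain in step~(iii).

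Concretely: since $\mathbf{n}_i^{i-1}$ is the nearest point of the $i$-th circle to $\mathbf{p}_{i-1}^*$, the three points $\mathbf{c}_i,\mathbf{n}_i^{i-1},\mathbf{p}_{i-1}^*$ are collinear with $\mathbf{p}_{i-1}^*$ lying on the ray from $\mathbf{c}_i$ through $\mathbf{n}_i^{i-1}$, beyond $\mathbf{n}_i^{i-1}$. Put $\mathbf{c}_i=\mathbf{0}$, $\mathbf{n}_i^{i-1}=(r_i,0)$, $\mathbf{n}_i^{i+1}=r_i(\cos\beta,\sin\beta)$ with $0<\beta<\pi$. The chord line is $\{\,(x,y):\sin\beta\cdot x+(1-\cos\beta)\cdot y=r_i\sin\beta\,\}$, the centre gives value $0<r_i\sin\beta$ (so the centre, and hence the major arc, lies on the ``$<$'' side), while $\mathbf{p}_{i-1}^*=(d,0)$ with $d>r_i$ gives $d\sin\beta>r_i\sin\beta$. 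Thus $\mathbf{p}_{i-1}^*\in\operatorname{int}\mathcal{H}$, and by symmetry so does $\mathbf{p}_{i+1}^*$. Both external points are on the \emph{minor}-arc side of the chord, not outside $\mathcal{H}$ as you claim.

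With (ii) corrected, perpendicular projection of a major-arc point onto the chord line does decrease both $l_i$ and $l_{i+1}$, and reflection across the chord does too; but neither image lies on the circle (the chord is not a diameter, so the reflection does not preserve the circle), and your ``radial retraction back onto the circle'' can undo the gain---there is no reason the retraction is distance non-increasing toward the two external points. You identify (iii) as the crux but give no argument for it, and as written it does not go through.

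The paper avoids this obstacle by reflecting across a \emph{diameter} rather than the chord: the line through $\mathbf{c}_i$ and $\mathbf{n}_i^{i+1}$ (respectively $\mathbf{n}_i^{i-1}$) maps the circle to itself, so a major-arc point is sent to another point on the circle with the same distance to $\mathbf{p}_{i+1}^*$ (resp.\ $\mathbf{p}_{i-1}^*$); the other distance is then compared via the law-of-cosines monotonicity $\tfrac{d}{d\phi}\|\mathbf{p}_i(\phi)-\mathbf{p}_{i-1}^*\|\ge 0$ on $[0,\pi]$. A short case split on the position of $\mathbf{p}_{i+1}^*$ relative to the axis through $\mathbf{p}_{i-1}^*$ then finishes the argument. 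If you want to rescue your plan, replace the chord mirror by one of these diameters.
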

\vspace{-0.5 cm}
\begin{proof}
Please see Appendix E for proof.
\end{proof}
\vspace{-0.25 cm}
Lemma 4 only indicates the range of points on a fixed radius arc where the optimal connecting point can lie. However, it does not elaborate on the approach that can be used to find it. In the following lemmas, we settle this argument.

\begin{lemma}
For a fixed radius of the $i$-th circular trajectory, and $\mathbf{p}_e^{1},\cdots, \mathbf{p}_e^{i-1},\mathbf{p}_e^{i+1},\cdots, \mathbf{p}_e^{I}$, the optimal connecting point on the $i$-th circular trajectory, $\mathbf{p}_e^i$, can be found through performing Bisection search over the points on the smaller arc that joins the points $\mathbf{n}_i^{i-1}$ and $\mathbf{n}_i^{i+1}$.
\end{lemma}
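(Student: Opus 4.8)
The plan is to reduce Lemma 5 to a one-dimensional unimodality statement and then invoke the standard convergence of bisection search for unimodal functions. First I would fix the $i$-th circle and parametrize its trajectory as $\mathbf{p}_i(\theta)=\mathbf{c}_i+r_i(\cos\theta,\sin\theta)$; by Lemma 3 the entry and exit points of this circle coincide, so the only free variable is the angle $\theta$ of the connecting point $\mathbf{p}_e^i=\mathbf{p}_i(\theta)$. Writing $\mathbf{a}=\mathbf{p}_e^{i-1}$ and $\mathbf{b}=\mathbf{p}_e^{i+1}$ (both held fixed, and both exterior to the $i$-th circle since the coverage circles are disjoint), the only part of the overall length $L$ that depends on $\theta$ is $f(\theta)=l_i+l_{i+1}=\|\mathbf{a}-\mathbf{p}_i(\theta)\|+\|\mathbf{p}_i(\theta)-\mathbf{b}\|$. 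By Lemma 4 an optimizer lies on the shorter arc joining $\mathbf{n}_i^{i-1}$ and $\mathbf{n}_i^{i+1}$; if $\alpha$ and $\beta$ denote the angular coordinates of these two nearest points (so, after relabelling, $\theta_1=\alpha\le\theta_2=\beta$), this arc is $[\theta_1,\theta_2]$ and has angular width $\theta_2-\theta_1\le\pi$.

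Next I would compute the derivative. With $g_a(\theta)=\|\mathbf{a}-\mathbf{p}_i(\theta)\|$ one obtains $g_a'(\theta)=\rho_a r_i\sin(\theta-\alpha)/g_a(\theta)$, where $\rho_a=\|\mathbf{a}-\mathbf{c}_i\|>r_i$, and symmetrically $g_b'(\theta)=\rho_b r_i\sin(\theta-\beta)/g_b(\theta)$ with $\rho_b=\|\mathbf{b}-\mathbf{c}_i\|>r_i$; note that $g_a$ is strictly increasing and $g_b$ strictly decreasing on $(\theta_1,\theta_2)$ because the relevant central angles lie in $[0,\pi]$ and $[-\pi,0]$. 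Evaluating at the endpoints, $f'(\theta_1)=g_b'(\theta_1)<0$ and $f'(\theta_2)=g_a'(\theta_2)>0$ (the complementary term vanishing because $g_a$, $g_b$ attain their minima at $\theta_1$, $\theta_2$). Hence $f'$ changes sign on the arc, the minimizer is interior, and to validate a bisection on $\mathrm{sign}\,f'(\theta)$ it remains only to show that $f'$ has exactly one zero, i.e.\ that $f$ is unimodal on $[\theta_1,\theta_2]$.

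The crux — and the step I expect to be the main obstacle — is this unimodality, because $f$ need not be convex on the whole arc (the individual curvatures $g_a''$ and $g_b''$ can change sign). I would get around this by proving that every interior critical point $\theta^*$ of $f$ is a strict local minimum: using $g_a''(\theta)=\big(\rho_a r_i\cos(\theta-\alpha)-g_a'(\theta)^2\big)/g_a(\theta)$ (and the analogue for $g_b$) together with the stationarity identity $g_a'(\theta^*)^2=g_b'(\theta^*)^2$, the claim $f''(\theta^*)>0$ reduces to a trigonometric inequality in the central angles $\theta^*-\alpha\in[0,\pi]$ and $\theta^*-\beta\in[-\pi,0]$; here the exterior hypotheses $\rho_a,\rho_b>r_i$ and the bound $\theta_2-\theta_1\le\pi$ are precisely what makes it go through — for instance $\cos(\theta-\alpha)+\cos(\theta-\beta)=2\cos\!\big(\theta-\tfrac{\alpha+\beta}{2}\big)\cos\!\big(\tfrac{\beta-\alpha}{2}\big)\ge 0$ on the arc is one of the ingredients. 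Once every interior critical point is a strict local minimum, the endpoint signs $f'(\theta_1)<0<f'(\theta_2)$ force there to be exactly one such point, which is therefore the unique minimizer of $f$ on the arc and, by Lemma 4, on the entire $i$-th circle; consequently the bisection search over $[\theta_1,\theta_2]$ driven by the sign of $f'$ converges to the optimal connecting point $\mathbf{p}_e^i$, which is the assertion of the lemma. (An alternative, if it can be pushed through, is to prove directly that $f''>0$ on the whole arc — i.e.\ strict convexity — which would subsume the above; I expect the delicate part there to be controlling the term of $g_a''$ that turns negative near $\theta_2$ against the positive $g_b''$.)
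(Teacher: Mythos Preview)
Your approach is sound in outline but takes a genuinely different route from the paper's. The paper never touches $f'$ or $f''$: it simply observes that on the restricted arc $l_i(\phi)=\|\mathbf p_e^{i-1}-\mathbf p_e^{i}\|$ is monotone increasing in $\phi$ while $l_{i+1}(\phi)=\|\mathbf p_e^{i+1}-\mathbf p_e^{i}\|$ is monotone decreasing, so each is quasi-convex in the single variable $\phi$; it then rewrites the minimization of $l_i+l_{i+1}$ in epigraph form $\min\{M+N:\ M\ge l_i(\phi),\ N\ge l_{i+1}(\phi),\ 0\le\phi\le\theta^{i+1}\}$, asserts that this reformulation can be converted to a convex program, and concludes that bisection suffices. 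Your route is instead a direct one-variable calculus argument: compute $f'(\theta)$, check the endpoint signs $f'(\theta_1)<0<f'(\theta_2)$, and show that every interior stationary point satisfies $f''(\theta^*)>0$, forcing a unique sign change of $f'$ and hence validating bisection on $\mathrm{sign}\,f'$.

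What each approach buys: the paper's epigraph argument is short and avoids all second-derivative bookkeeping, but it rests on a convexity claim that is stated rather than verified --- monotonicity yields quasi-convexity of $l_i$ and $l_{i+1}$, yet neither the sum of quasi-convex functions nor the epigraph of a merely quasi-convex function is automatically convex, so the passage from ``both quasi-convex'' to ``convex program'' is exactly the step that carries the load. Your argument is more explicit about where the geometric hypotheses $\rho_a,\rho_b>r_i$ and $\theta_2-\theta_1\le\pi$ are actually used, and your endpoint-sign computation is correct; the price is the trigonometric inequality for $f''(\theta^*)>0$, which you rightly flag as the crux and only sketch. If you close that step (or push through the stronger claim $f''>0$ on the whole arc), your proof would be at least as complete as the paper's, just by a different mechanism.
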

\vspace{-0.5 cm}
\begin{proof}
Please see Appendix F for proof.
\end{proof}
\vspace{-0.25 cm}
Lemma 5 provides the optimal angle of $\mathbf{p}_e^i$ for a fixed radius of the $i$-th circular trajectory. In the following lemma, we show how to obtain the optimal value of $r_i$ for a fixed value of $\angle{\mathbf{p}_e^{i}}=\phi^*$.
\begin{lemma}
The optimal value of $r_i^{j+1}$ is obtained via performing Bisection search over $[r_i^j, r_{i}^{opt}]$. 
\end{lemma}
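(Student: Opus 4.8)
The plan is to show that, once the connecting angle is frozen at $\angle\mathbf{p}_e^i=\phi^*$ and every other circular trajectory and connecting point is held fixed, the part of the total completion time that still depends on $r_i$ is unimodal (in fact one–dimensional convex) on the bracket $[r_i^j,r_i^{opt}]$, so that its minimizer is the unique zero of a monotone derivative and hence is located by bisection. Writing $\mathbf{p}_e^i(r_i)=\mathbf{c}_i+r_i(\cos\phi^*,\sin\phi^*)$ and using Lemmas~2 and~3 (the connecting pieces are straight and the entry/exit points on circle $i$ coincide, so only the two adjacent segments vary with $r_i$), the objective of interest is
\begin{equation}
F(r_i)\;=\;\tau_i(r_i)\;+\;\frac{1}{v_{max}}\big(l_{i-1}(r_i)+l_i(r_i)\big),
\end{equation}
where $l_{i-1}(r_i)=\|\mathbf{p}_e^i(r_i)-\mathbf{p}_e^{i-1}\|$, $l_i(r_i)=\|\mathbf{p}_e^i(r_i)-\mathbf{p}_e^{i+1}\|$, and $\tau_i(r_i)=2\pi/v_i^\star(r_i)$ is the time for one lap of the $i$-th circular trajectory at the largest admissible angular velocity $v_i^\star(r_i)$, i.e. the solution of the \textbf{P5}/\textbf{C11} problem rewritten for the $i$-th coverage area (so $v_i^\star(r_i)=\min\{\tfrac{\Phi_a}{R_{th}}\log_2(1+\mathrm{SNR}_i(r_i)),\,v_{max}/r_i\}$, with $\mathrm{SNR}_i(r_i)$ the worst-case receive SNR on that trajectory).

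First I would dispose of the connecting term: since $r_i\mapsto\mathbf{p}_e^i(r_i)$ is affine and the Euclidean norm is convex, $l_{i-1}(\cdot)$ and $l_i(\cdot)$ are convex in $r_i$, hence $l_{i-1}+l_i$ is convex with $|(l_{i-1}+l_i)'|\le 2$. Next, I would transcribe the path–loss analysis behind \textbf{P4} (inequalities (12)–(13)) and Lemma~1 to the $i$-th area: as $r_i$ grows from its smallest admissible value up to $r_i^{opt}$ — the counterpart, for the $i$-th area, of the balance radius $\hat r_U$ of Section~IV, intersected with the mobility–feasible range — the worst–case SNR $\mathrm{SNR}_i(r_i)$ is non-decreasing, so $v_i^\star(r_i)$ is non-decreasing and $\tau_i(r_i)$ is non-increasing; past $r_i^{opt}$ the worst point moves to $\mathbf{c}_i$ (or the velocity becomes mobility-limited) and $\tau_i$ starts growing at a rate of at least $2\pi/v_{max}$. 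Because $2\pi/v_{max}>2/v_{max}\ge\big|\tfrac{1}{v_{max}}(l_{i-1}+l_i)'\big|$, the objective $F$ is strictly increasing for $r_i>r_i^{opt}$, so the minimizer lies at or below $r_i^{opt}$; and that the search may start from the current iterate $r_i^j$ rather than the absolute lower bound follows because the alternating scheme only replaces $r_i^j$ when this strictly decreases $F$, while on this branch $F$ is non-increasing in $r_i$ in a neighbourhood of $r_i^j$, so the improving direction is necessarily towards larger $r_i$; thus $r_i^{j+1}\in[r_i^j,r_i^{opt}]$.

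The remaining step — and the one I expect to be the real obstacle — is to show that on $[r_i^j,r_i^{opt}]$ the derivative $F'$ changes sign at most once. Since $(l_{i-1}+l_i)'$ is already non-decreasing, it suffices to prove $\tau_i'$ is non-decreasing there, i.e. that $\tau_i(r_i)=2\pi R_{th}/\big(\Phi_a\log_2(1+\mathrm{SNR}_i(r_i))\big)$ is convex on the SNR-limited sub-branch (it is affine on the mobility-limited sub-branch). This reduces to the scalar inequality $\frac{d^2}{dr_i^2}\big[1/\log_2(1+\mathrm{SNR}_i(r_i))\big]\ge 0$, which I would check by writing $\mathrm{SNR}_i(r_i)$ as a decreasing function of the single quantity $|\phi_e(\mathbf{A}_i,r_i)|\big(4\pi f_c\sqrt{d_{\mathbf{A}_i}^2(r_i)+H^2}\big)^2$, whose monotone dependence on $r_i$ is exactly what (12)–(13) established, and then differentiating. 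Granting this, $F$ is convex on $[r_i^j,r_i^{opt}]$, so it has a unique stationary point there (or the minimum is at an endpoint), and bisecting on the sign of $F'$ converges to $r_i^{j+1}$, which is the claim. The delicate point is precisely this curvature property: it does not follow from the monotonicities used elsewhere and must be extracted from the explicit path–loss form; should strict convexity of $\tau_i$ fail, one would still only need $F'$ to be single-crossing on the bracket, which the opposing trends of $\tau_i$ (non-increasing, convex) and of the connecting term (non-decreasing derivative) make very plausible but which would then have to be argued directly rather than by invoking convexity.
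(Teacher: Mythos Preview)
Your proposal interprets ``optimal $r_i^{j+1}$'' as the minimizer of the full completion-time contribution $F(r_i)=\tau_i(r_i)+\tfrac{1}{v_{max}}(l_{i-1}(r_i)+l_i(r_i))$, and then tries to justify bisection by proving $F$ is convex (or at least unimodal) on $[r_i^j,r_i^{opt}]$. That is \emph{not} what the paper proves, and your argument does not close: the convexity of $\tau_i(r_i)=2\pi R_{th}/\big(\Phi_a\log_2(1+\mathrm{SNR}_i(r_i))\big)$ is precisely the step you flag as ``the real obstacle,'' and you neither carry out the second-derivative computation nor give a workable single-crossing argument for $F'$. Since unimodality of $F$ is the entire content of the claim under your interpretation, the proposal has a genuine gap.

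The paper avoids this difficulty by redefining what ``optimal $r_i^{j+1}$'' means. It does not minimize $F$ directly; instead it solves the surrogate problem
\[
\max_{r_i^{j+1}}\; r_i^{j+1}\quad\text{s.t.}\quad l_i(r_i^{j+1})+l_{i+1}(r_i^{j+1})\le \zeta^{j+1},\qquad r_i^j\le r_i^{j+1}\le r_i^{opt},
\]
where $\zeta^{j+1}$ is the value of $l_i+l_{i+1}$ obtained after the $\phi$-step of Lemma~5. The only analytical ingredient needed is convexity of $r_i\mapsto l_i(r_i)+l_{i+1}(r_i)$, which (as you also observe) is immediate from the affinity of $r_i\mapsto \mathbf{p}_e^i(r_i)$ and convexity of the norm. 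Convexity of the constraint makes the feasible set an interval, so the maximizer is its right endpoint, locatable by bisection. No curvature information about $\tau_i$ is required: the paper only uses that $\tau_i$ is monotonically decreasing on $[\tfrac{R_i}{2},r_i^{opt}]$, so pushing $r_i$ rightwards while holding $l_i+l_{i+1}$ at or below $\zeta^{j+1}$ is guaranteed to be a descent step for the total time --- which is exactly what Lemma~7 needs. Your route would, if it went through, yield a strictly better update at each iteration, but at the cost of a convexity claim the paper never needs and you do not establish.
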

\vspace{-0.5 cm}
\begin{proof}
Please see Appendix G for proof.
\end{proof}
\vspace{-0.25 cm}
Based on Lemma 5 and Lemma 6 we develop an iterative algorithm that finds the optimal value of $\mathbf{p}_e^{i}$. The convergence of the proposed iterative algorithm relies on the non-increasing property of the objective function achieved in each successive iteration. In the following lemma, we show the \textbf{Algorithm 2} achieves a non-increasing value of the completion time.

\begin{lemma}
For fixed values of $\mathbf{p}_e^{1},\cdots, \mathbf{p}_e^{i-1},\mathbf{p}_e^{i+1},\cdots, \mathbf{p}_e^{I}$, the optimal value of $\mathbf{p}_e^i$ provided by \textbf{Algorithm 2} achieves a non-increasing total completion time in successive iterations.
\end{lemma}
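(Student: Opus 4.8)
The plan is to prove the non-increasing property by a standard block-coordinate-descent monotonicity argument, leveraging the structure already established in Lemmas 4--6. First I would fix the iteration index $i$ and treat the quantities $\mathbf{p}_e^{1},\dots,\mathbf{p}_e^{i-1},\mathbf{p}_e^{i+1},\dots,\mathbf{p}_e^{I}$ as frozen, so that the only free variables at this step are the connecting point $\mathbf{p}_e^i$ on the $i$-th circular trajectory (equivalently its angle $\phi$) and the radius $r_i$. Recall from equation (18) that the total completion time is an affine increasing function of the path length $L=\sum_{i=1}^I C_i+\sum_{i=1}^{I-1}l_i$, where $C_i=2\pi r_i$ (up to the time-scaling factor determined by the fixed velocity), and that with all other connecting points held fixed, changing $(\phi,r_i)$ only affects the three terms $C_i$, $l_i$, and $l_{i+1}$. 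Hence it suffices to show that the inner loop of \textbf{Algorithm 2} produces a non-increasing value of $g(\phi,r_i)\triangleq C_i(r_i)+l_i(\phi,r_i)+l_{i+1}(\phi,r_i)$ across successive outer iterations.

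Next I would unpack the inner loop itself as an alternating minimization over the two variables. Given the current iterate $(\phi^{(j)}, r_i^{(j)})$: Lemma 5 guarantees that the Bisection search over the smaller arc between $\mathbf{n}_i^{i-1}$ and $\mathbf{n}_i^{i+1}$ returns the minimizer $\phi^{(j+1)}=\arg\min_{\phi} g(\phi, r_i^{(j)})$ for the fixed radius $r_i^{(j)}$ — crucially, because Lemma 4 restricts the optimal point to that arc, the Bisection is searching over exactly the set that contains the global-in-$\phi$ optimum, so $g(\phi^{(j+1)},r_i^{(j)})\le g(\phi^{(j)},r_i^{(j)})$. Then Lemma 6 guarantees that the Bisection search over $[r_i^j, r_i^{\text{opt}}]$ returns $r_i^{(j+1)}=\arg\min_{r_i} g(\phi^{(j+1)}, r_i)$ for the updated angle, giving $g(\phi^{(j+1)},r_i^{(j+1)})\le g(\phi^{(j+1)},r_i^{(j)})$. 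Chaining these two inequalities yields $g(\phi^{(j+1)},r_i^{(j+1)})\le g(\phi^{(j)},r_i^{(j)})$, i.e., monotone non-increase of the local objective within the inner loop, and since $g$ is bounded below (lengths are nonnegative and $r_i$ is bounded below by the coverage-radius constraint analogous to C10), the inner loop converges.

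Finally I would lift this to the global claim. Since at iteration $i$ only $l_i$, $l_{i+1}$, and $C_i$ change while the remaining summands of $L$ are untouched by construction of the update rule, a decrease in $g$ translates directly into a decrease (or no change) in $L$, and therefore — via the affine increasing relationship between $L$ and the total completion time $T$ from equation (18) — into a non-increasing total completion time. I would also note the caveat that the radius update must remain feasible for the coverage constraint on the $i$-th circle (the per-circle analogue of requiring $\hat R(\mathbf x)\ge R_{th}$ inside circle $i$); the interval $[r_i^j, r_i^{\text{opt}}]$ in Lemma 6 is precisely the feasible range, so feasibility is preserved throughout. The main obstacle I anticipate is the bookkeeping needed to make precise the assertion that the update at the $i$-th circle leaves all other trajectory segments literally unchanged — one must verify that $l_j$ for $j\notin\{i,i+1\}$ depends only on connecting points that are frozen at this step, which follows from the straight-line connection structure of Lemma 2 and the coincidence of start/leave points in Lemma 3 (so that each $l_j$ is a function of $\mathbf{p}_e^{j}$ and $\mathbf{p}_e^{j+1}$ only). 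Once that is pinned down, the rest is the routine telescoping argument sketched above.
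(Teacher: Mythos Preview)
Your overall block-coordinate-descent scaffolding matches the paper's Appendix~H exactly: fix everything except $\mathbf{p}_e^i$, argue that only $l_i$, $l_{i+1}$, and the time for $C_i$ change, then show each half-step of the inner loop does not increase the completion time. Where your argument slips is in the modeling of the objective and of the $r_i$-step.

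First, the completion time is \emph{not} an affine increasing function of the path length $L$ with a ``fixed velocity'' scaling, as you assume. The UAV's angular velocity on circle $i$ depends on $r_i$ via the single-circle analysis of Sections~IV-A to IV-C, and the paper explicitly uses that the \emph{time} to traverse $C_i$ is monotonically \emph{decreasing} in $r_i\in[\tfrac{R_i}{2},r_i^{opt}]$ even though the arc length $C_i=2\pi r_i$ is increasing. So your surrogate $g(\phi,r_i)=C_i(r_i)+l_i+l_{i+1}$, interpreted as a length, is the wrong quantity: it can actually \emph{increase} after the $r_i$-update while the completion time decreases.

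Second, and relatedly, Lemma~6 does not say that the Bisection returns $\arg\min_{r_i} g(\phi^{(j+1)},r_i)$. It solves problem~(31), which \emph{maximizes} $r_i^{j+1}$ subject to $l_i(r_i^{j+1})+l_{i+1}(r_i^{j+1})\le \zeta^{j+1}$ and $r_i^{j}\le r_i^{j+1}\le r_i^{opt}$. The paper's monotonicity then follows from two separate facts: the constraint guarantees the straight-segment travel time does not increase past its value at $\zeta^{j+1}$, and enlarging $r_i$ strictly reduces the circular-segment travel time. Your chain $g(\phi^{(j+1)},r_i^{(j+1)})\le g(\phi^{(j+1)},r_i^{(j)})$ does not hold for the length surrogate and is not what Lemma~6 delivers. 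If you replace $g$ by the actual time function and invoke the constrained form of~(31) rather than an unconstrained $\arg\min$, your argument becomes the paper's.
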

\vspace{-0.5 cm}
\begin{proof}
Please see Appendix H for proof.
\end{proof}

\begin{algorithm}[t]
\caption{Finding optimal value of $\mathbf{p}_e^{i}$ through iterative optimization of $\phi$ and $r_i$.}
\KwInput{$\mathbf{p}_e^{i-1}, \mathbf{p}_e^{i+1}$, $r_i^0=\frac{R_i}{2}, r_i^{opt}, \mathbf{n}_e^{i-1}, \mathbf{n}_e^{i+1}$, maximum iterations $=U$.}
\KwOutput{$\mathbf{p}_e^{i}$.}
\For{$j=1:U$}{
Solve (10) for a given value of $r_i=r_i^{j-1}$ and obtain $\phi^*$\;
Solve (12) for a given value of $\phi^*$ and obtain $r_i^{j*}$\;
}
\end{algorithm}
\vspace{-0.25 cm}
The above analysis of optimally connecting three circular trajectories leads to an iterative optimization algorithm that minimizes the total completion time. The proposed iterative algorithm is given as \textbf{Algorithm 3} and its convergence guarantee is proven in the following proposition.

\begin{proposition}
The total completion time achieved in successive iterations of \textbf{Algorithm 3} is non-increasing and therefore \textbf{Algorithm 3} is guaranteed to converge.
\end{proposition}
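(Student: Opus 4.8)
The plan is to prove the proposition by a monotone--bounded argument: I will show that the sequence of total completion times generated along the iterations of \textbf{Algorithm 3} is non-increasing and bounded below, and then conclude convergence by the monotone convergence theorem.

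\textit{Step 1: localizing the effect of a single update.} Index the outer iterations of \textbf{Algorithm 3} by $k$ and let $T^{(k)}$ denote the total completion time of the trajectory at the start of iteration $k$. At iteration $k$ the algorithm picks one circle, say the $i$-th, freezes all connecting points $\mathbf{p}_e^j$ with $j\neq i$, and invokes \textbf{Algorithm 2} to re-optimize the pair $(r_i,\mathbf{p}_e^i)$. By Lemma 2 consecutive circular trajectories are joined by straight segments, and by Lemma 3 the entry and exit points of each circle coincide (so the detour lengths $m_i$ vanish); hence the travelled path obeys the decomposition $L=\sum_{i=1}^{I}C_i+\sum_{i=1}^{I-1}l_i$, and the total completion time is obtained from $L$ by adding the arc-traversal times (fixed once the radii and the threshold-feasible angular speeds are fixed) and the segment-traversal times. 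Changing only $(r_i,\mathbf{p}_e^i)$ therefore affects only $C_i$, the two straight segments adjacent to the $i$-th circle, and their traversal times; every other term of $T$ is left untouched.

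\textit{Step 2: monotonicity.} By Lemma 7, with $\mathbf{p}_e^j$ ($j\neq i$) held fixed, the internal iterations of \textbf{Algorithm 2} yield a non-increasing total completion time; running it from the incumbent $(r_i,\mathbf{p}_e^i)$ therefore returns a configuration whose completion time does not exceed that of the incumbent. Combined with Step 1 (all remaining terms of $T$ unchanged), this gives $T^{(k+1)}\le T^{(k)}$. Chaining these inequalities over all outer iterations shows that $\{T^{(k)}\}_{k\ge 0}$ is non-increasing.

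\textit{Step 3: boundedness and conclusion.} Each $T^{(k)}$ is a sum of nonnegative quantities (arc-traversal times plus straight-segment-traversal times), so $T^{(k)}\ge 0$ for all $k$; a sharper bound $T^{(k)}\ge \sum_{i=1}^{I}2\pi r_i/v_{max}$ also holds, but any fixed lower bound suffices. A non-increasing sequence that is bounded below converges to its infimum, so $\{T^{(k)}\}$ converges; hence \textbf{Algorithm 3} is guaranteed to converge in objective value, as asserted.

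\textit{Main obstacle.} The crux is Step 1: one must verify that re-optimizing the middle circle genuinely localizes the change to $C_i$, the two adjacent segments and their traversal times, without perturbing the rest of the trajectory --- this is exactly what Lemmas 2--4 guarantee (straight connections, coincident entry/exit points, and the fixed arc on which $\mathbf{p}_e^i$ may move). One must also ensure that \textbf{Algorithm 2} is warm-started at the incumbent $(r_i,\mathbf{p}_e^i)$ (or, equivalently, returns the conditional minimizer), so that the inner monotonicity of Lemma 7 indeed translates into $T^{(k+1)}\le T^{(k)}$ for the outer loop, and that the radius update of Lemma 6 keeps $r_i$ within its admissible range so that constraint C1 (the data threshold $R_{th}$) remains satisfied at every iterate --- otherwise the monotone sequence would not be a sequence of feasible completion times. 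Finally, the argument delivers convergence of the objective value only; convergence of the connecting points themselves would require an additional compactness/continuity argument and is not claimed here.
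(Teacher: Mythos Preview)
Your proposal is correct and follows essentially the same approach as the paper's own proof: localize the effect of updating a single connecting point $\mathbf{p}_e^i$ (so only $C_i$ and the two adjacent segments change), invoke the inner monotonicity of \textbf{Algorithm 2} (Lemma 7) to get $T^{(k+1)}\le T^{(k)}$, and conclude convergence from boundedness below. Your write-up is in fact more careful than the paper's (you explicitly cite Lemmas 2--4 for the path decomposition, flag the warm-start and feasibility issues, and note that only objective-value convergence is claimed), but the underlying argument is the same monotone--bounded scheme.
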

\vspace{-0.5 cm}
\begin{proof}
Please see Appendix I for proof.
\end{proof}

\begin{algorithm}[t]
\caption{Iterative algorithm for finding optimal values of $\mathbf{p}_e^{i}, ~\forall~ i\in \{1,\cdots, I\}$ .}
\KwInput{$\mathbf{c}_i,R_i,~ \forall ~ i \in \{1,\cdots, I\}$, $\mathbf{n}_i^{i-1}, ~\forall ~ i \in \{1,\cdots, I-1\}$, maximum iterations $=K$\;}
\KwOutput{$\mathbf{p}_e^{i*}, ~\forall i~\in \{1,\cdots, I\}$\;}
\For{$k=1:K$}{
\If{$k==1$}{
$\mathbf{p}_e^{i,I}=\mathbf{n}_i^{i-1}$\;
}
\For{$i=1:I$}{
Apply \textbf{Algorithm 1} to obtain $\mathbf{p}_e^i$ with $\mathbf{p}_e^{i-1}=\mathbf{p}_e^{i-1, kI+i-1}, \mathbf{p}_e^{i+1}=\mathbf{p}_e^{i+1,kI+i-1}$\;
Set $\mathbf{p}_e^{i,kI+i}=\mathbf{p}_e^i$\;
Set $\mathbf{p}_e^{\hat{i},kI+i}=\mathbf{p}_e^{\hat{i},kI+i-1},~ \forall ~ \hat{i}\neq i, \hat{i} \in\{1, \cdots, I\}$\;
}
}
Set $\mathbf{p}_e^{i*}=\mathbf{p}_e^{i,(K+1)I}, ~\forall i~\in \{1,\cdots, I\}$\;
\end{algorithm}

\vspace{-0.5 cm}
\section{Numerical Results}
This section presents the numerical results for the proposed algorithms. Unless specified otherwise, the simulation parameters used to obtain the results are presented in Table II. Note that due to the mobility constraint (C6 in \textbf{P1}) and the infinite number of constraints due to the uncertainty of location within the service area, the formulated UAV routing problem is radically different from those formulated in the past for communications system design. Therefore, the routing algorithms proposed to solve those problems cannot be used to obtain the solutions for the localization problem considered in this work.

This section is divided into three logical subsections. The first subsection discusses the single circular trajectory scenario with $v_{opt}\leq v_{max}$. The second subsection discusses a single circular trajectory with $v_{opt}\geq v_{max}$. Finally, the third subsection illustrates results for multiple circular trajectory cases.  
\begin{table}
\centering
\caption{Simulation parameters.}
\begin{tabular}{ |c|c|c|c|}
 \hline
 Parameter & value & Parameter & value \\
 \hline
 \hline
 $f_c$ & $\{3, 6\}$ GHz & Transmit power & $20$ dBm \\
 \hline
 $\Phi_a$ & $\{\frac{\pi}{6},\frac{\pi}{4},\frac{\pi}{3}\}$ rad & Noise density & $-174$dBm/Hz \\ 
 \hline
 $\Delta f$ & $20$ MHz & $H$ & $100$ m \\
 \hline
 $R_0$ & $1000$ m & $v_{max}$ & $72$ mph \\
 \hline
\end{tabular}
\vspace{-4mm}
\end{table}

\vspace{-6mm}
\subsection{Single Scanned Area with $v_{opt}\leq v_{max}$}
Fig. 5 shows the optimal velocity results with respect to data thresholds. It can be observed that the optimal velocity for $r_U=r_{opt}$ is higher than that for $r_U=\frac{R}{2}$. This is due to the fact that for $r_U=r_{opt}$ the path loss for the worst location in the scanning area is less as compared to $r_U=\frac{R}{2}$, thus allowing a higher speed for the UAV while satisfying the data threshold requirements. Furthermore, it can be observed that for a higher carrier frequency, the optimal velocity is smaller as compared to that for a smaller carrier frequency. This is due to the higher penetration loss, which results in overall higher path loss, and subsequently, more dwelling time is required for each location to satisfy the data threshold requirement.

\begin{figure}[t]
  \includegraphics[width=\columnwidth]{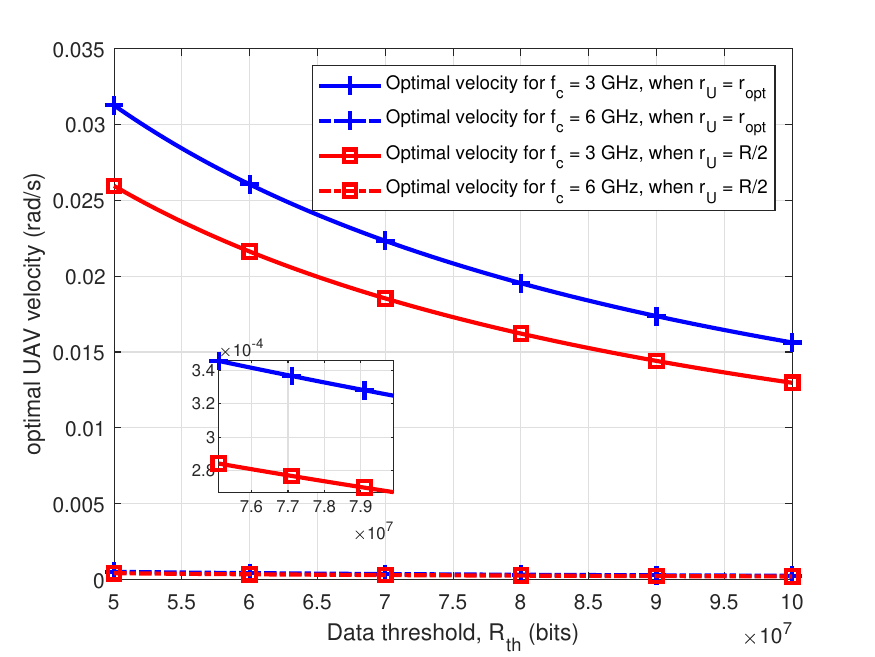}
  \vspace{-2mm}
  \caption{Optimal UAV velocity for different data thresholds with $\Phi_a=\frac{\pi}{6}$.}
  \label{fig:test1_1}
  \vspace{-4mm}
\end{figure}
\begin{figure}[t]
  \includegraphics[width=\columnwidth]{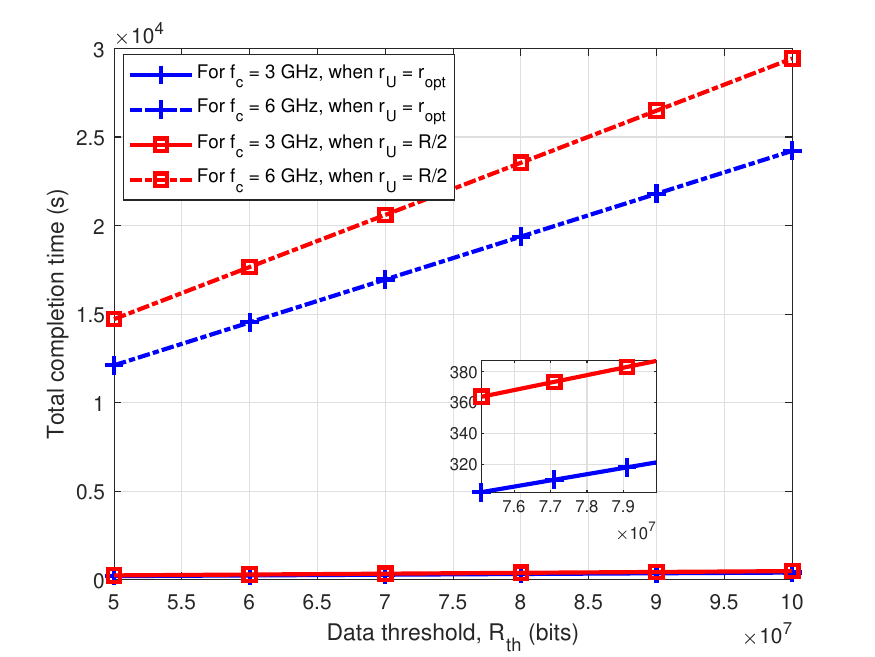}
  \vspace{-4mm}
  \caption{Total completion time for different data thresholds with $\Phi_a=\frac{\pi}{6}$.}
  \label{fig:test2_1}
  \vspace{-4mm}
\end{figure}

Fig. 6 and Fig. 7 present the total completion time and time savings percentages, for the $r_U=r_{opt}$ and $r_U=\frac{R}{2}$. For both values of $r_U$, it can be observed from Fig. 6 that the total completion time increases with the increase in data threshold requirement. This is due to the fact that a higher value of data transfer to any particular location requires more dwelling time which results in a higher value of total completion time. Moreover, the considered scenario to obtain these results require $v_{opt}$ to be less than $v_{max}$. Thus, we observe that the percentage of time savings is independent of the data threshold in Fig. 7. This result verifies the theoretical observations made in Proposition 1. The effects of having $v_{opt}\geq v_{max}$ are discussed in the following subsection.

\begin{figure}
  \includegraphics[width=\columnwidth]{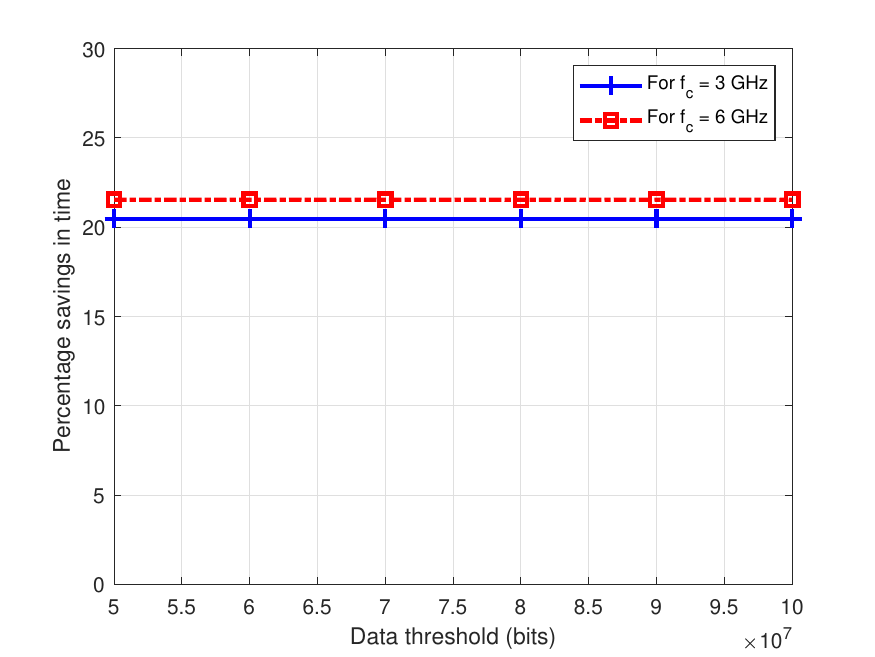}
  \vspace{-4mm}
  \caption{Percentage savings in completion time for different values of $f_c$.}
  \label{fig:test1_2}
  \vspace{-8mm}
\end{figure}

\begin{figure}
  \includegraphics[width=\columnwidth]{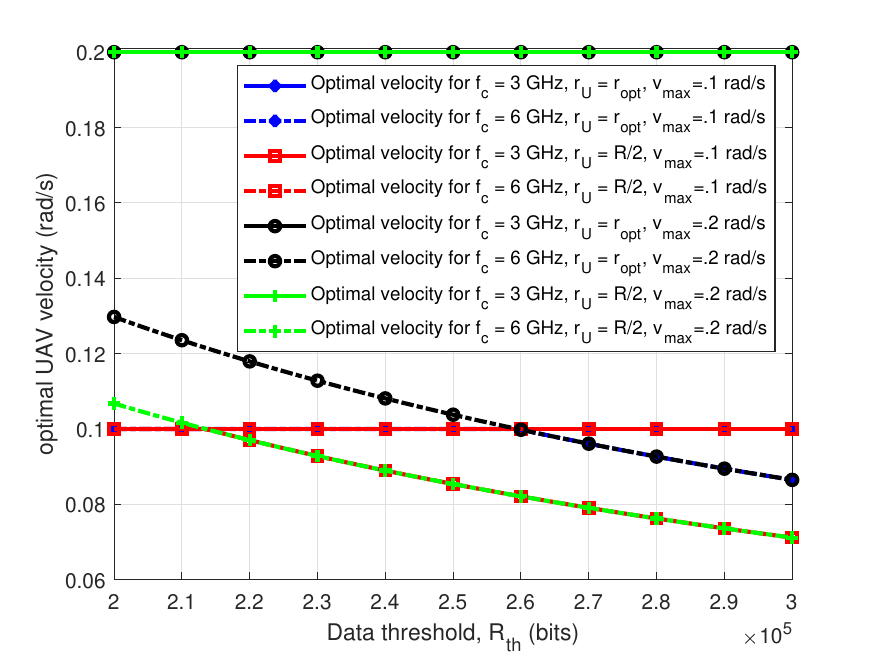}
  \vspace{-4mm}
  \caption{Optimal UAV velocity for different data thresholds with $R=300$, $\Phi_a=\frac{\pi}{6}$ m.}
  \label{fig:test2_2}
  \vspace{-1mm}
\end{figure}

\vspace{-3mm}
\subsection{Single Scanned Area with $v_{opt}\geq v_{max}$}
\vspace{-1mm}
The optimal velocity, total completion time and time savings percentage results for the case when $v_{opt} \geq v_{max}$ are provided in Fig. 8, Fig. 9 and Fig. 10, respectively. For these results, we assumed $R=300$ m. As noted in Proposition 1, there is a range of data thresholds for which the optimal velocity for both $r_U=\hat{r}_U$ and $r_U=\frac{R}{2}$ will be $v_{max}$ and for this range of data thresholds, the total completion time will be identical and time savings will be zero. In Fig. 8, we can see that for data thresholds range $[2.1, 2.6] \times 10^5$ the optimal velocity for $r_U=\frac{R}{2}$ is decreasing while that for $r_U=\hat{r}_U$ is still $=.1$ rps (the maximum possible velocity). Due to this behaviour, we see a slope between $[2.1,2.6] \times 10^5$ in the time savings graph in Fig. 10. On the other hand, for data thresholds range between $[2,2.1]\times 10^5$, we observe zero savings in total completion time since the optimal velocity for both choices of $r_U$ is same. Moreover, for data thresholds range $\geq 2.6 \times 10^5$, the time savings percentage becomes independent of the data threshold value. This also verifies the theoretical result derived earlier in Proposition 1.

\begin{figure}[t]
  \includegraphics[width=\columnwidth]{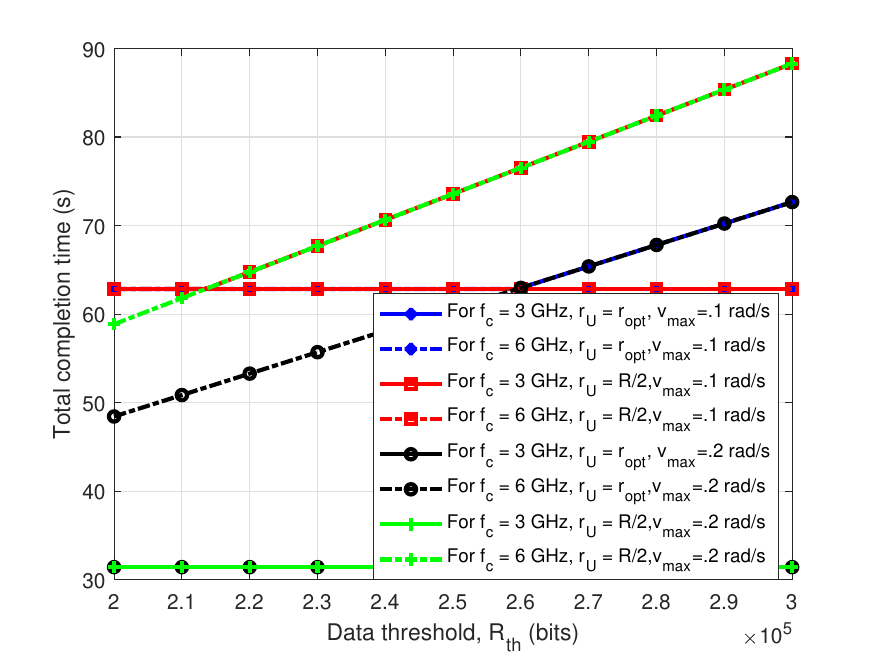}
  \caption{Total completion time for different data thresholds with $R=300$ m.}
  \label{fig:test1_3}
  \vspace{-4mm}
\end{figure}%
\begin{figure}[t]
  \includegraphics[width=\columnwidth]{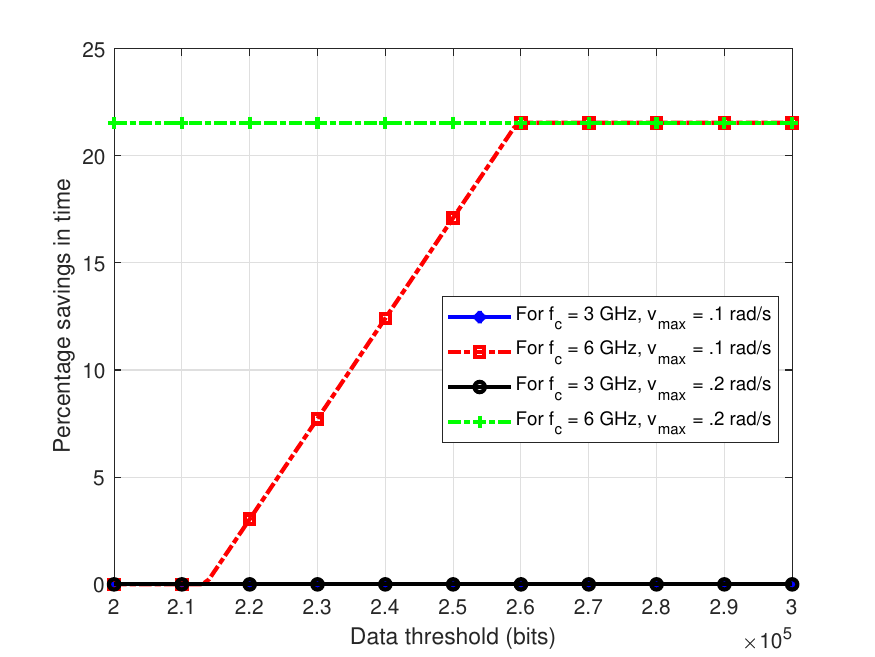}
  \caption{Percentage of savings in total completion time for different values of $f_c$ with $R=300$ m.}
  \label{fig:test2_3}
  \vspace{-4mm}
\end{figure}

\vspace{-0.5 cm}
\subsection{Multiple Scanned Areas}
For multiple circular trajectory results, we assume there are three isolated circular regions which are centered at $\mathbf{c}_1= [-353.6,0],\mathbf{c}_2= [176.8,-306.2]$ and $\mathbf{c}_3=[176.8,306.2]$, with coverage radius $r_1=200$ m, $r_2=100$ m and $r_3=200$ m. These regions are assumed to be covered with buildings. Hence, any users present within these smaller circular coverage areas are assumed to experience penetration loss. Moreover, the bigger circular area is assumed to be centered at $[0,0]$ and has a radius of $1000$ m. Apart from the smaller circular areas within the bigger circular area, the rest of the area within the bigger circular area is assumed to be covered by trees or other vegetation. Thus, any users present within this area is expected to experience much less path loss as compared to the users present within the smaller circular regions. With these settings, the optimal trajectory for the UAV is shown in Fig. 11 for $\Phi_a \in \{\frac{\pi}{4},\frac{\pi}{3}\}$. It can be observed that the optimal radius $r_{opt}$ for a smaller value of $\Phi_a$ is smaller as compared to that for higher value of $\Phi_a$. This is due to the fact that for a higher value of $\Phi_a$ the worst point (point $\mathbf{A}$ in Fig. 2) becomes further away from the UAV and hence to balance the path loss between the points $\mathbf{c}_0$ and $\mathbf{A}$ the UAV has to become closer to point $\mathbf{A}$ by traversing at a bigger radius. Hence, we observe a higher value of optimal radius for higher value of $\Phi_a$. 

Fig. 12 shows the total completion time for the multiple circular trajectory scenario. We assumed that the disparity between the path loss between for any location within building regions (smaller circular regions) and region without the buildings is $30$ dB. Thus, any location that is not within the smaller circular region will experience $30$ dB less loss. It can be observed that the proposed scheme with $r_U=r_{opt}$ performs much better than the $r_U=\frac{R}{2}$ case. Moreover, comparing total time completion results with those obtained for a single circular trajectory case (shown in Fig. 6), we can see that the multiple circular trajectory case performs much better than the single circular trajectory case. For instance, for a data threshold value of $50$ Mb, the single circular trajectory scheme requires around $12500$ seconds while the multiple circular trajectory scheme requires less than $1500$ seconds. Hence, a saving of at least $7$ times in total completion time.

\begin{figure}[t]
  \includegraphics[width=\columnwidth]{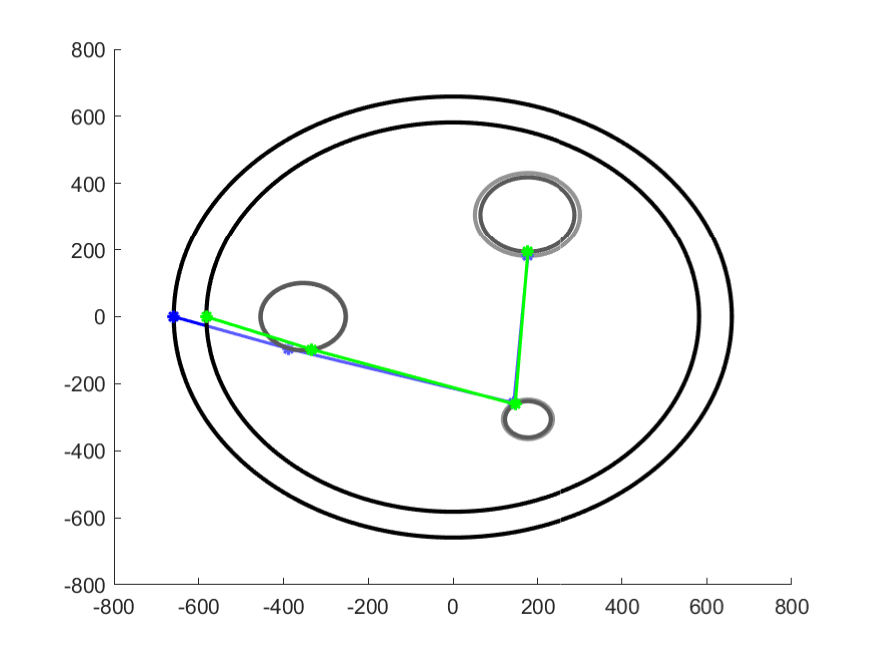}
  \caption{UAV trajectory for multiple circular trajectory case with $\Phi_a=\frac{\pi}{4}$ (green line) and $\Phi_a=\frac{\pi}{3}$ (blue line).}
  \label{fig:test1_4}
  \vspace{-5mm}
\end{figure}%
\begin{figure}
  \includegraphics[width=\columnwidth]{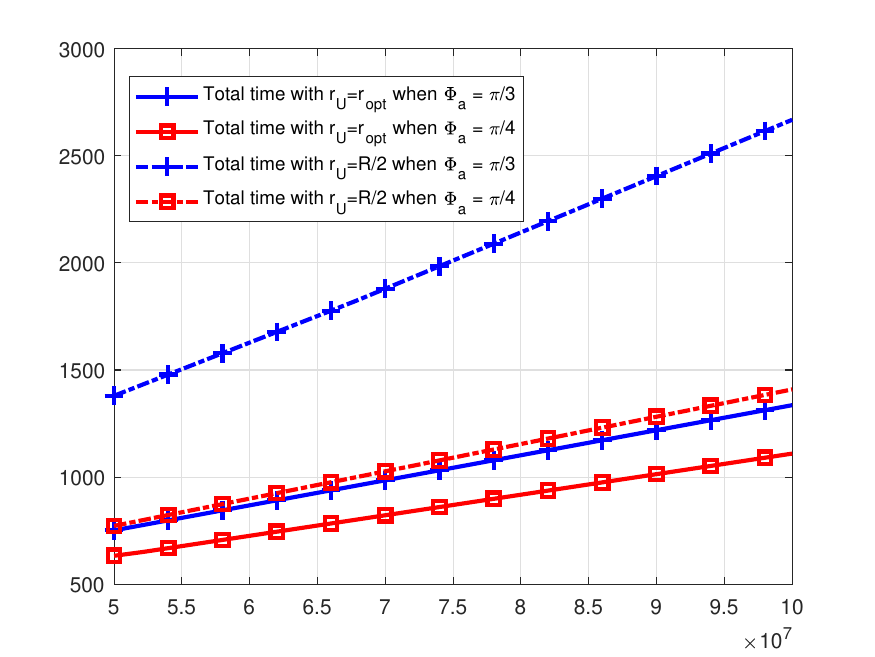}
  \vspace{-4mm}
  \caption{Total completion time for multiple circular trajectory case.}
  \label{fig:test2_4}
  \vspace{-4mm}
\end{figure}

\vspace{-2mm}
\section{Conclusions}
This paper provides new low complexity UAV trajectory design algorithms for accomplishing the positioning and communication tasks with minimum completion time in the disaster hit areas. Specifically, based on the topographical scenario, two UAV trajectory designs are proposed. In the \textit{first} design, the whole scanning area is covered by the UAV with a single circular trajectory. Then, the proposed algorithm finds the optimal UAV velocity and the optimal radius for the circular trajectory. This first algorithm uses low-complexity Bisection search for finding the optimal UAV velocity and it is applicable to the scanning areas which have uniform topography. To deal with the diversity in topography of the coverage area, we propose the \textit{second} trajectory design, where the whole scanning area is divided into smaller circular areas with uniform topography. Then, the proposed iterative algorithm finds the optimal trajectory for traversing the whole scanning area by optimally connecting the small circular trajectories. Both, theoretical and simulation results, suggest that the achieved time saving due to the application of proposed algorithms is independent of the data thresholds and it improves with the increase in carrier frequency. Since positioning performance is directly proportional to the carrier frequency, we concluded that the proposed algorithms are more applicable to high precision wideband integrated sensing and communication systems.
\vspace{-2mm}
\appendix
\vspace{-2mm}
\subsection{Proof of Lemma 1:}
\vspace{-2mm}
\begin{proof}
This lemma can be proved by first noting that the constraint function in C8 is an increasing function of $r_U, \forall r_U \in (\frac{R}{2},\hat{r}_U)$. Therefore, for any fixed value of $r_U^1>r_U^2$ we have
\begin{align}
    &\scalemath{.8}{\frac{\Phi_a}{v}\log_2\left(\!1\!+\!\frac{\gamma l_p(f_c)G_0c^2}{|\phi_e(\mathbf{A},r_U^1)|\Phi_a \!\left(4\pi f_c\sqrt{(d_{\mathbf{A}}^2(r_U^1)\!+\!H^2)}\right)^2}\!\right)}\nonumber \\ & \scalemath{.8}{> \frac{\Phi_a}{v}\log_2\left(\!1\!+\!\frac{\gamma l_p(f_c)G_0c^2}{|\phi_e(\mathbf{A},r_U^2)|\Phi_a \!\left(4\pi f_c\sqrt{(d_{\mathbf{A}}^2(r_U^2)\!+\!H^2)}\right)^2}\!\right).} 
\end{align}
\normalsize
Second, we note that the optimal solution for \textbf{P4-mod}-$r_U$ is achieved when C8 is met with equality. Therefore, if the solution of \textbf{P4-mod}-$r_U^2$ is $v^{\textbf{P4-mod}-r_U^2}$, then we cannot have $v^{\textbf{P4-mod}-r_U^1} \in[0,  v^{\textbf{P4-mod}-r_U^2}]$ since no such value can meet constraint C8 with equality. Therefore, we must have
\begin{equation}
    v^{\textbf{P4-mod}-r_U^1}\geq v^{\textbf{P4-mod}-r_U^2}.
\end{equation} This completes the proof.
\end{proof}
\vspace{-6mm}
\subsection{Proof of Proposition 1:}
\begin{proof}
First, we note the following facts:
\begin{itemize}
    \item F1: The channel gain (and subsequently SNR) for the worst point is higher if we use $r_U=\hat{r}_U$ rather than $r_U=\frac{R}{2}$.
    \item F2: After choosing a value of $r_U$, the speed should be chosen in such a way that the data constraint is met with equality.
    \item F3: The speed and data transferred have an inverse relationship for any fixed value of the radius of the trajectory.
\end{itemize}

These facts are discussed in detail in Section III-B, and Section III-C. Moreover, let us assume that at any particular time instant, the spectral efficiency for $r_U=\hat{r}_U$ is denoted by $a_1$ and that for radius $r_U=\frac{R}{2}$ by $a_2$\footnote{It can be seen that $a_1,a_2$ are independent of data threshold.}. Now according to F1, we must have 
\vspace{-2mm}
\begin{equation}
    a_1 > a_2.
\end{equation}

Next, assume that the total of data to be transferred is $R_{th}$. Then, according to F2 we have
\vspace{-2mm}
\begin{equation}
    \frac{\Phi_a a_1}{v_1} = R_{th},~~\frac{\Phi_a a_2}{v_2} = R_{th},
\end{equation}
where $v_1,v_2$ are the optimal velocities for the UAV when the trajectory radius is $\hat{r}_U$ and $\frac{R}{2}$, respectively. According to F3, we must have
\vspace{-3mm}
\begin{equation}
    v_1 > v_2.    
\end{equation}

Since the completion time is given as $\frac{2 \pi}{v}$, we must have
\vspace{-2mm}
\begin{equation}
    T_{sav}(R_{th}) = \frac{2 \pi}{v_2} - \frac{2 \pi}{v_1} > 0.
\end{equation}
However, if we put an upper limit on the value of $v$, we may have $T_2-T_1=0$ for a certain range of data. Since for such a range of data thresholds, the UAV can increase its speed for both possible values of radii but cannot do so due to the limitation on maximum velocity. This behaviour will result in no savings in time for smaller total data requirements. Now there will be some minimum value of data threshold for which the optimal speed for $r_U=\frac{R}{2}$ will be smaller than the maximum possible speed while it will be the maximum speed for trajectory radius $\hat{r}_U$, and this behaviour will continue for some range of data thresholds. Over this range of data thresholds, we should expect the time needed for $r_U=\frac{R}{2}$ to increase while remaining constant for trajectory radius $\hat{r}_U$. Therefore, we should expect a slope in the time savings curve for this range of data thresholds.
\end{proof}
\vspace{-6mm}
\subsection{Proof for Lemma 2:}
\begin{proof}
\vspace{-0.15 cm}
This lemma can be proved by contradiction. Denote the distance of the circular trajectory associated with the center $\mathbf{c}_i$ by $C_i$ and the distance of the path connecting the circular trajectories for $\mathbf{c}_i$, $\mathbf{c}_{i+1}$ by $l_i$. Moreover, denote by $\mathbf{p}_s^i, \mathbf{p}_e^i$ the points where the UAV meets and leaves the circular trajectory corresponding to $\mathbf{c}_i$, and $m_i$ denotes the length of the straight path taken by the UAV for travelling between $\mathbf{p}_s^i$ and $\mathbf{p}_e^i$. Then, the total size of the travelled distance, denoted by $L$, is equal to the sum of $C_i$'s, $l_i$'s and $m_i$'s. Mathematically, it can be written as
\begin{equation}
    L=\sum_{i=1}^{I}C_i+\sum_{i=1}^{I-1}l_i+\sum_{i=1}^{I-1}m_i.
\end{equation}

Assume, without loss of generality, that the UAV leaves the circular trajectory corresponding to $\mathbf{c}_i$ at point $\mathbf{p}_i\in \mathbb{R}^2$ and connects with the circular trajectory corresponding to $\mathbf{c}_{i+1}$ at point $\mathbf{p}_{i+1}\in \mathbb{R}^2$. Furthermore, assume that the UAV follows a non-straight path between points $\mathbf{p}_i, \mathbf{p}_{i+1}$, then we can always replace the non-straight path between points $\mathbf{p}_e^i, \mathbf{p}_s^{i+1}$ with a straight line to reduce the travel distance of UAV for any fixed values of $C_i, C_{i+1}, m_i$. This completes the proof.
\end{proof}
\vspace{-6mm}
\subsection{Proof of Lemma 3:}
\begin{proof}
Consider, without loss of generality, that the optimal values of $\mathbf{p}_s^i$, $\mathbf{p}_e^{i}$ are given by $\mathbf{p}_s^{i*}$, $\mathbf{p}_e^{i*}$ and the corresponding optimal values of $C_i, l_i, m_i$ by $C_i^*, l_i^*, m_i^*$. Furthermore, consider the points $ \mathbf{p}_s^{i+1*}, \mathbf{p}_e^{i+1*}$. Then, there are following two possibilities: \textbf{Case 1}: Points $\mathbf{p}_e^{i*}, \mathbf{p}_s^{i+1*}, \mathbf{p}_e^{i+1*}$ lie on the same line segment; \textbf{Case 2}: Points $\mathbf{p}_e^{i*}, \mathbf{p}_s^{i+1*}, \mathbf{p}_e^{i+1*}$ do not lie on the same line segment.

Case 1: For this case, we can show that for any fixed values of $C_i, l_i \forall i\in \{1,\cdots, I\}$, the optimal value of $m_{i+1}$ is attained when, after completing the circular trajectory $C_{i+1}$, the UAV travels in a straight line from point $\mathbf{p}_s^{i+1*}$ to point $\mathbf{p}_e^{i+1*}$. Moreover, the value of $L$ is non-increasing if we replace $\mathbf{p}_s^{i+1*}$ by $\mathbf{p}_e^{i+1*}$. Hence, the proof is complete for Case 1.

Case 2: First, observe that the optimal value of $m_{i+1}$ is achieved when UAV travels in a straight line between points $\mathbf{p}_s^{i+1*}$ and $\mathbf{p}_e^{i+1*}$. Using the triangle inequality, it can be easily shown that
\begin{equation}
l_i+m_{i+1}=\|\mathbf{p}_s^{i+1*}-\mathbf{p}_e^{i*}\|+\|\mathbf{p}_e^{i+1*}-\mathbf{p}_s^{i+1*}\|\geq \|\mathbf{p}_e^{i+1*}-\mathbf{p}_e^{i*}\|,
\end{equation}
the equality holds when $\mathbf{p}_s^{i+1*}=\mathbf{p}_e^{i+1*}$. The UAV must use $\mathbf{p}_s^{i+1*}=\mathbf{p}_e^{i+1*}$ to minimize $L$.
\end{proof}
\vspace{-6mm}
\subsection{Proof of Lemma 4:}
\begin{proof}
Assume, without loss of generality, that $\mathbf{c}_i=[0, 0]$ and the line joining the $\mathbf{c_i}, \mathbf{p}_e^{i-1}$ is the x-axis. Then, we can have four possibilities for $\mathbf{p}_e^{i+1*}$. Specifically, \textbf{Case 3}: $\mathbf{p}_e^{i+1*}$ lies in the first quadrant; \textbf{Case 4}: $\mathbf{p}_e^{i+1*}$ lies in the second quadrant; \textbf{Case 5}: $\mathbf{p}_e^{i+1*}$ lies in the third quadrant; and \textbf{Case 6}: $\mathbf{p}_e^{i+1*}$ lies in the fourth quadrant.

\begin{figure}[t]
  \includegraphics[width=.7\columnwidth]{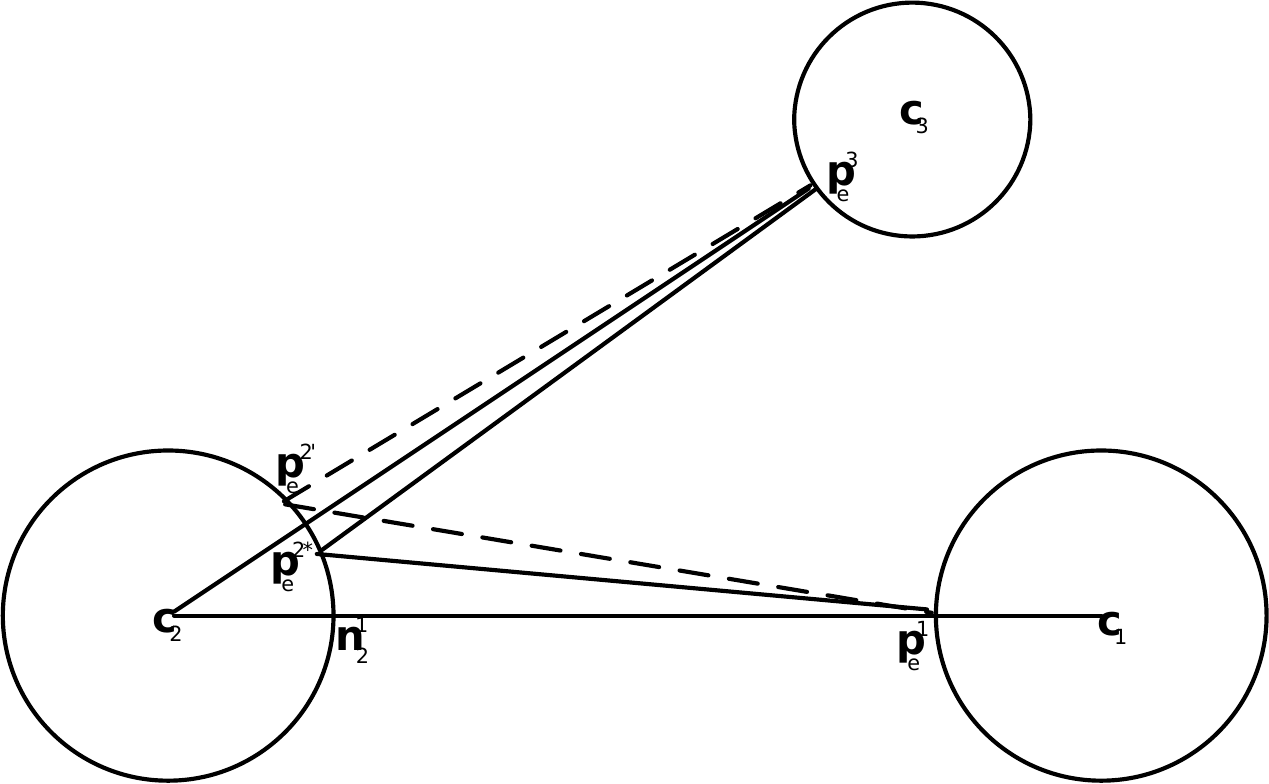}
  \caption{ Illustration for \textit{Case 3}. Clearly, the dotted line path is longer than the solid line path.}
  \label{fig:test1_5}
  \vspace{-6mm}
\end{figure}%
\begin{figure}[t]
  \includegraphics[width=0.9\columnwidth]{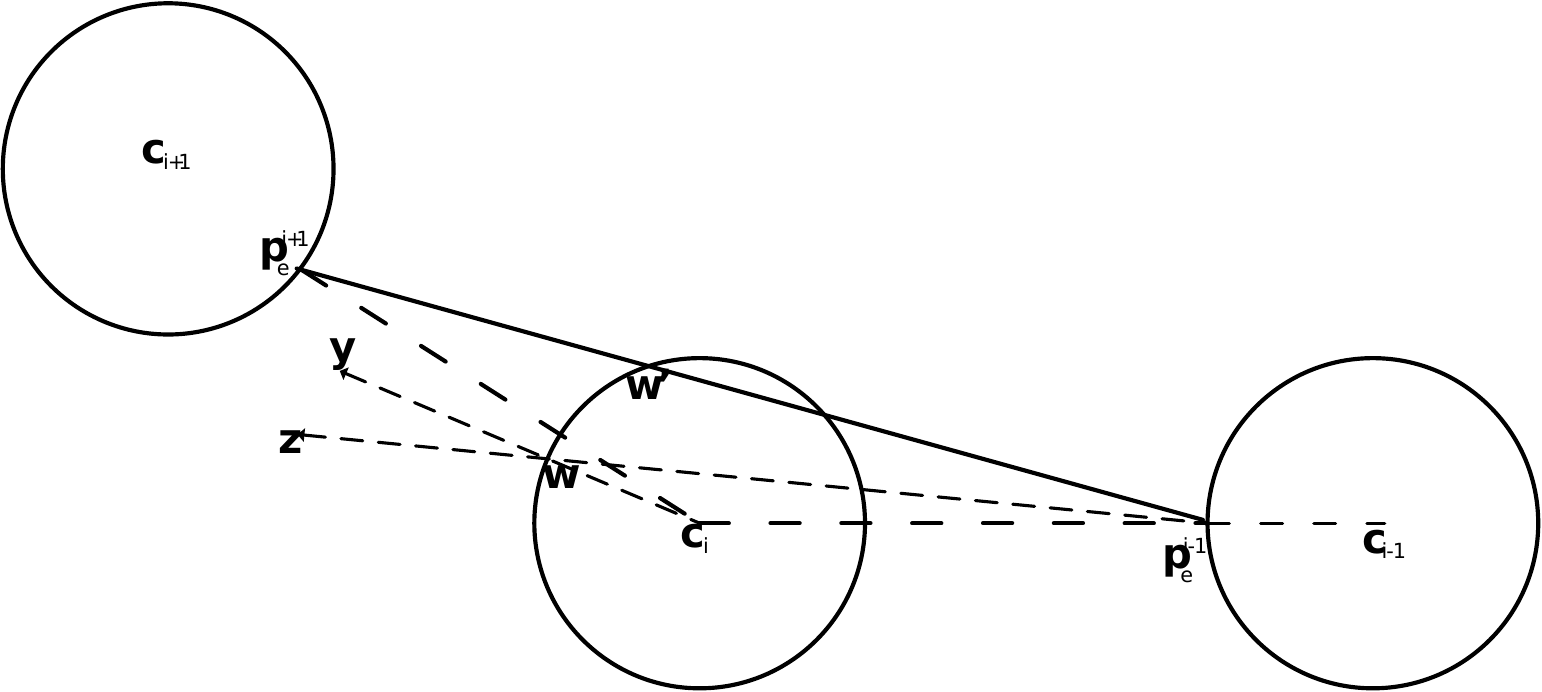}
  \caption{Illustration for \textit{Case 4}. In this figure, we define $\angle \mathbf{c}_{i-1}\mathbf{p}_e^{i-1}\mathbf{p}_e^{i+1}=\theta, \angle \mathbf{c}_{i-1}\mathbf{p}_e^{i-1}\mathbf{w}=\theta', \angle \mathbf{c}_{i-1}\mathbf{c}_{i}\mathbf{p}_e^{i+1}=\phi, \angle \mathbf{c}_{i-1}\mathbf{c}_{i}\mathbf{w}=\phi'$.}
  \label{fig:test2_5}
  \vspace{-6mm}
\end{figure}
Case 3 is depicted in Fig. 13. First, we provide the proof for Case 3 and then for Case 4. The proofs for Case 5 and Case 6 follows similar line of reasoning and hence omitted for brevity.

\textit{Case 3}: Denote the angle between the x-axis and $\mathbf{n}_i^{i+1}$ as $\theta^{i+1}$. Suppose the optimal point, denoted by $\mathbf{x}^*$, lies on the arc between the angles $\theta^{i+1}$ and $2\theta^{i+1}$. Then, based on the symmetry, we can always find another point, $\hat{\mathbf{x}}$, on the arc between angles $0$ and $\theta^{i+1}$ such that $\|\mathbf{x}^*-\mathbf{p}_e^{i+1}\|=\|\hat{\mathbf{x}}-\mathbf{p}_e^{i+1}\|$. Next, we consider the distance between $\hat{\mathbf{x}}$ and $\mathbf{p}_e^{i-1}$. Note that we have
\begin{equation}
    \|\hat{\mathbf{x}}-\mathbf{p}_e^{i-1}\|\!=\!\sqrt{\|\mathbf{c}_i\!-\!\mathbf{p}_e^{i-1}\|^2\!+\!r_i^2\!-\!2\left(r_i\|\mathbf{c}_i\!-\!\mathbf{p}_e^{i-1}\|\right)\cos(\phi)},
\end{equation}
where $0\leq \phi\leq \theta^{i+1}$ is the angle of the point $\hat{\mathbf{x}}$ from the x-axis. 

Taking the first derivative of the right-hand side of (5) with respect to $\phi$ we get
\begin{align}
    \scalemath{.8}{\frac{d \|\hat{\mathbf{x}}\!-\!\mathbf{p}_e^{i-1}\|}{d\phi}}&=\scalemath{.8}{\frac{\|\mathbf{c}_i-\mathbf{p}_e^{i-1}\|r_i\sin(\phi)}{\sqrt{\|\mathbf{c}_i\!-\!\mathbf{p}_e^{i-1}\|^2\!+\!r_i^2\!-\!2r_i\|\mathbf{c}_i\!-\!\mathbf{p}_e^{i-1}\|\cos(\phi)}}}\nonumber \\&\scalemath{.8}{=\frac{\|\mathbf{c}_i-\mathbf{p}_e^{i-1}\|r_i\sin(\phi)}{\|\hat{\mathbf{x}}-\mathbf{p}_e^{i-1}\|}.} 
\end{align}

Clearly, we have $\frac{d \|\hat{\mathbf{x}}-\mathbf{p}_e^{i-1}\|}{d\phi}\geq 0,~ \forall~ 0\leq \phi\leq \pi$. Therefore, the distance between $\mathbf{p}_e^{i-1}$ and $\hat{\mathbf{x}}$ is always smaller than the distance between $\mathbf{p}_e^{i-1}$ and $\mathbf{x}^*$ since the angle corresponding to $\mathbf{x}^*$ is greater than $\phi$. Hence, we have
\begin{align}
    \|\hat{\mathbf{x}}-\mathbf{p}_e^{i-1}\|\leq \|\mathbf{x}^*-\mathbf{p}_e^{i-1}\| & \Rightarrow \|\hat{\mathbf{x}}-\mathbf{p}_e^{i-1}\|+\|\hat{\mathbf{x}}-\mathbf{p}_e^{i+1}\|\nonumber \\ &\leq \|\mathbf{x}^*-\mathbf{p}_e^{i-1}\|+\|\mathbf{x}^*-\mathbf{p}_e^{i+1}\|, 
\end{align}
where the second inequality is the result of choosing appropriate $\hat{\mathbf{x}}$ such that $\|\mathbf{x}^*-\mathbf{p}_e^{i+1}\|=\|\hat{\mathbf{x}}-\mathbf{p}_e^{i+1}\|$. This contradicts the assumption that $\mathbf{x}^*$ is the optimal point. Hence, we conclude that there is no point on the arc between angles $\theta^{i+1}$ and $2\theta^{i+1}$ can be optimal. A similar line of reasoning can be used to show that the optimal point cannot lie on the arc between angles $2\theta^{i+1}$ and $2\pi$. This completes the proof for Case 3.

\textit{Case 4}: For Case 4, two further possibilities are denoted by C4-1 and C4-2. Specifically, C4-1 represents the scenario where $\mathbf{p}_e^{i-1}$ and $\mathbf{p}_e^{i+1}$ can be joined by the line that intersects the circular trajectory twice. This possibility is illustrated in Fig. 14. In this scenario, the optimal connecting point on the $i$-th circular trajectory should lie on the line that joins $\mathbf{p}_e^{i-1}$ and $\mathbf{p}_e^{i+1}$ since any other point will lead to a higher total distance according to the triangle inequality. Next, we observe that this point can only lie on the arc between points $\mathbf{n}_i^{i-1}$ and $\mathbf{n}_i^{i+1}$. To illustrate this observation, consider Fig. 14. Without loss of generality, assume that the optimal leaving point on the $i$-th circular trajectory is on the arc between points $\mathbf{n}_i^{i+1}$ and $[-r_i, 0]$. In Fig. 14, this point is denoted by $\mathbf{w}$. Furthermore, we define the angle from the x-axis of point $\mathbf{n}_i^{i+1}$ from $\mathbf{c}_i=[0,0]$ as $\phi$. Clearly, we have $\phi<\theta<\theta'$. We note that any ray emanating from point $\mathbf{c}_i$ and meeting the line segment $\mathbf{wz}$ will have an angle greater than $\phi$. This means the point $\mathbf{p}_e^{i+1}$ cannot lie on the line segment $\mathbf{wz}$. This contradicts the assumption that any straight line passing through the arc between $\mathbf{n}_i^{i+1}$ and $[-r_i,0]$ can join the points $\mathbf{p}_e^{i-1}$ and $\mathbf{p}_e^{i+1}$. Therefore, points $\mathbf{p}_e^{i-1}$ and $\mathbf{p}_e^{i+1}$ cannot be joined with a straight line segment that passes through the arc that joins points $\mathbf{n}_i^{i+1}$ and $[-r_i, 0]$. Hence, the lemma is proved for sub-case C4-1. On the other hand, C4-2 deals with the possibility where $\mathbf{p}_e^{i-1}$ and $\mathbf{p}_e^{i+1}$ can be joined by the line segments that intersect the circular trajectory only once. For C4-2, using the same line of reasoning used for Case 3, we can show that the optimal point always lies on the arc joining the points $\mathbf{n}_i^{i-1}$ and $\mathbf{n}_i^{i+1}$.

\textit{Case 5 and Case 6:} Using the same reasoning as used for Case 3 and Case 4, we can also proof for Case 5 and Case 6.
\end{proof}
\vspace{-10mm}
\subsection{Proof of Lemma 5:}
\begin{proof}
The proof first shows that the problem of finding the optimal point can be formulated as a convex optimization problem. Then, the proof uses the fact that any local optimal for a convex optimization problem is also the global optimal.

Note that the distance between the connecting point and $\mathbf{p}_e^{i-1}, \mathbf{p}_e^{i+1}$ can be written as
\begin{equation}
    \scalemath{.9}{\|\mathbf{p}_e^{i-1}-\mathbf{p}_e^{i}\|=\sqrt{(\mathbf{p}_e^{i-1}(x)-r_i\cos(\phi))^2+r_i^2\sin^2(\phi)},}
\end{equation}
\begin{equation}
    \scalemath{.8}{\|\mathbf{p}_e^{i+1}\!-\!\mathbf{p}_e^{i}\|\!=\!\sqrt{(\mathbf{p}_e^{i\!+\!1}(x)\!-\!r_i\cos(\phi))^2\!+\!(\mathbf{p}_e^{\!i\!+1}(y)\!-\!r_i\sin(\phi))^2}.}
\end{equation}

It is clear that $l_{i}(\phi)\triangleq\|\mathbf{p}_e^{i-1}-\mathbf{p}_e^{i}\|$, is an increasing function of $\phi$ while $l_{i+1}(\phi)\triangleq \|\mathbf{p}_e^{i+1}-\mathbf{p}_e^{i}\|$ is a decreasing function of $\phi$. Hence, both $l_{i}(\phi)$ and $l_{i+1}(\phi)$ are quasi-convex functions. The optimization problem corresponding to the minimization of $l_{i}(\phi)+l_{i+1}(\phi)$ can be written as
\vspace{-2mm}
\begin{mini}|s|
{\phi^{j+1}}{l_i(\phi^{j+1})+l_{i+1}(\phi^{j+1})}{}{}
\addConstraint{0\leq \phi^{j+1}\leq \theta^{i+1},}
\end{mini}
which, by introducing new variables $M,N$, can be equivalently written as 
\vspace{-5mm}
\begin{mini}|s|
{\phi^{j+1}, M,N}{M+N}{}{}
\addConstraint{M\geq l_i(\phi^{j+1}),}
\addConstraint{N\geq l_{i+1}(\phi^{j+1}),}
\addConstraint{0\leq \phi^{j+1} \leq \theta^{i+1}.}
\end{mini}
Problem (30) can be converted to a convex optimization problem. Therefore, the Bisection search can be utilized to obtain the optimal value of $\phi$. This completes the proof.
\end{proof}
\vspace{-4mm}
\subsection{Proof of Lemma 6:}
\begin{proof}
First, we note that the time needed to traverse $C_i$ is a monotonically decreasing function of $r_i \in [\frac{R_i}{2}, r_{i}^{opt}]$. Hence, by choosing $r_i=r_i^{opt}$, the time needed to traverse $C_i$ is minimized. However, this choice may increase the value of $l_i(r_i)+l_{i+1}(r_i)$. Therefore, the optimal value of $r_i$ in the $j+1$-th iteration can be obtained by solving the following problem
\vspace{-3mm}
\begin{maxi}|s|
{r_i^{j+1}}{r_i^{j+1}}{}{}
\addConstraint{l_i(r_i^{j+1})+l_{i+1}(r_i^{j+1})\leq \zeta^{j+1}}
\addConstraint{r_i^j\leq r_i^{j+1}\leq r_i^{opt},}
\end{maxi}
where $\zeta^{j+1}$ is the optimal value of $l_i(r_i)+l_{i+1}(r_i)$ obtained by performing the optimization over $\phi$ in the $j+1$-th iteration, and $r_i^{j} \in [\frac{R_i}{2}, r_{i}^{opt}]$ is the optimal value of $r_i$ obtained in $j$-th iteration. 

Note that the objective function and the second constraint of (31) are linear functions. In order to establish the convexity/non-convexity of $l_i(r_i)+l_{i+1}(r_i)$, we focus on the distance between a random point in space and a point on the circumference of a circle. We show that the distance between any random point and a point on the circumference of a circle is a convex function of the radius of that circle. Denote a random point by $\mathbf{z}$ and the point on the circle of radius $r$ by $[r\cos(\phi), r\sin(\phi)]$. Then, it follows

\begin{equation}
    \scalemath{.85}{f(r)\triangleq\|\mathbf{z}-[r\cos(\phi), r\sin(\phi)]\|=\|\mathbf{z}(x)-r\cos(\phi), \mathbf{z}(y)-r\sin(\phi)\|}. 
\end{equation}

Recall that if $f(r)$ is a convex function, then we must have

\begin{equation}
    f(\lambda r_1+(1-\lambda)r_2)\leq \lambda f(r_1)+(1-\lambda) f(r_2),
\vspace{-0.1 cm}
\end{equation}
where $\lambda \in [0,1]$. For $f(r)$ to be convex, the necessary and sufficient condition is

\begin{equation}
    \scalemath{.9}{\|\mathbf{z}(x)-(\lambda r_1+(1\!-\!\lambda)r_2)\cos(\phi), \mathbf{z}(y)-(\lambda r_1+(1\!-\!\lambda)r_2)\sin(\phi)\|} \nonumber
\end{equation}
\begin{align}
    &\scalemath{.9}{\leq \lambda \|\mathbf{z}(x)-r_1\cos(\phi), \mathbf{z}(y)-r_1\sin(\phi)\|} \nonumber \\ &\scalemath{.9}{+(1-\lambda)\|\mathbf{z}(x)-r_2\cos(\phi), \mathbf{z}(y)-r_2\sin(\phi)\|.} 
\end{align}

Next, by utilizing the triangle inequality and the non-negativity of $\lambda, 1-\lambda$, we show that (34) is always true. Therefore, $f(r)$ is a convex function. This means $l_i(r_i)\geq 0$ and $l_{i+1}(r_i)\geq 0$ are also convex. Since the non-negative sum of convex functions is also convex, we conclude that problem (31) is a convex optimization problem with a single optimization variable $r_i^{j+1}$. Therefore, we can employ Bisection search to obtain its solution. This completes the proof.
\end{proof}
\vspace{-4mm}
\subsection{Proof of Lemma 7:}
\begin{proof}
Assume that the optimal completion time achieved after $j$-th iteration of \textbf{Algorithm 2} is $T_j$ and the corresponding value of optimal radius is denoted by $r_i^{j*}$. Then, we need to show that $T_{j+1}\leq T_j$. Note that in the $j+1$-th iteration of \textbf{Algorithm 2} only $\mathbf{p}_e^i$ is optimized. Therefore, the only part of the total time that is affected by \textbf{Algorithm 2} in each iteration corresponds to the time needed to traverse $l_i+C_i+l_{i+1}$. Then, line 2 of \textbf{Algorithm 2} achieve a global optimal of $\phi^{j+1}$ that minimizes travel distance $l_i(\phi^{j+1})+l_{i+1}(\phi^{j+1})$. Denote the completion time obtained via this optimization by $T_{j+1}^{\phi^{j+1}}$. Clearly, $T_{j+1}^{\phi^{j+1}}\leq T_{j}$ due to the global optimality of $\phi^{j+1}$. The remaining task is to show that $T_{j+1}^{r_i^{j+1}}\leq T_{j+1}^{\phi^{j+1}}$, where $T_{j+1}^{r_i^{j+1}}$ denotes the completion time achieved by optimization of $r_i^{j+1}$. Note that by the construction of the optimization problem (31) through the involvement of constraint $l_i(r_i^{j+1})+l_{i+1}(r_i^{j+1})\leq \zeta^{j+1}$, the travel time needed to cover the distance $l_i+l_{i+1}$ is smaller than that achieved by selecting $r_i=r_i^{j}$. Since the time needed to cover $C_i$ is a monotonic decreasing function of $r_i \in [\frac{R_i}{2}, r_i^{opt}]$, we can show that $T_{j+1}^{r_i^{j+1}}\leq T_{j+1}^{\phi^{j+1}}$. Hence, we have established that $T_{j+1}\triangleq T_{j+1}^{r_i^{j+1}}\leq T_{j+1}^{\phi^{j+1}}\leq T_j$. This completes the proof.
\end{proof}
\vspace{-6mm}
\subsection{Proof of Proposition 2:}
\begin{proof}
Without loss of generality, consider the $((k-1)I+i)$-th iteration. Assume that the total completion time achieved at the end of the previous iteration is denoted by $T_{(k-1)I+i-1}$ and the corresponding optimal connecting endpoints on each circular trajectory are denoted by $[\mathbf{p}_e^{1,(k-1)I+i-1},\cdots,\mathbf{p}_e^{I,(k-1)I+i-1}]$. Then, in the $((k-1)I+i)$-th iteration, the goal is to minimize the time needed to travel from $\mathbf{p}_e^{i-1,(k-1)I+i-1}$ to $\mathbf{p}_e^{i+1,(k-1)I+i-1}$, which is equal to the sum of travel time corresponding to $l_i+l_{i+1}$ and the coverage time corresponding to the coverage area centered at $\mathbf{c}_i$, by optimizing $\mathbf{p}_e^{i}$ while the rest of the endpoints are chosen according to (35) shown at the top of this page. 
\begin{figure*}
\begin{equation}
[\mathbf{p}_e^{1,(k-1)I+i*},\cdots, \mathbf{p}_e^{i-1,(k-1)I+i*},\mathbf{p}_e^{i+1,(k-1)I+i*},\cdots,\mathbf{p}_e^{I,(k-1)I+i*}]\nonumber
\end{equation}
\begin{equation}
=[\mathbf{p}_e^{1,(k-1)I+i-1*},\cdots, \mathbf{p}_e^{i-1,(k-1)I+i-1*},\mathbf{p}_e^{i+1,(k-1)I+i-1*},\cdots,\mathbf{p}_e^{I,(k-1)I+i-1*}].
\end{equation}
\hrule
\end{figure*}

With a selection of connecting endpoints, only the time for travelling from $\mathbf{p}_e^{i-1,(k-1)I+i*}$ to $\mathbf{p}_e^{i+1,(k-1)I+i*}$ is reduced while the rest of the travel times for all the $\mathbf{p}_e^{j,(k-1)I+i*}$ to $\mathbf{p}_e^{j+1,(k-1)I+i*}$ remain equal to their values at the end of $((k-1)I+i-1)$-th iteration. Hence, the total completion time achieved at the end of $((k-1)I+i)$-th iteration is less than or equal to the total completion time achieved at the end of $((k-1)I+i-1)$-th iteration. Thus, the completion time achieved by \textbf{Algorithm 3} is non-increasing after each iteration. Since, the total completion time is lower bounded, \textbf{Algorithm 3} is guaranteed to converge. This completes the proof. 
\end{proof}
\vspace{-2mm}
\bibliographystyle{IEEEtran}
\bibliography{refs}

\end{document}